\newtheorem{lemma}{Lemma}[section]
\newtheorem{theorem}[lemma]{Theorem}
\newtheorem{proposition}[lemma]{Proposition}
\newtheorem{corollary}[lemma]{Corollary}
\newtheorem{remark}[lemma]{Remark}
\newtheorem{hypothesis}[lemma]{Assumption}
\newtheorem{example}[lemma]{Example}
\newtheorem{notation}{Notation}
\newcommand{\R}{\mathbb{R}}
\def\<{\left\langle }
\def\>{\right\rangle }
\def\ds{\begin{displaystyle}}
\def\eds{\end{displaystyle}}
\def\R{\mathbb R}
\def\E{\mathbb E}
\def\P{\mathbb P}
\def\F{\mathbb F}
\def\1{\mathbf 1}
\def\to{\rightarrow}
\begin{document}

\title{\bf A Pricing Formula for Delayed Claims: \\Appreciating the Past to Value the Future}
%

\author{Enrico Biffis\footnote{Enrico Biffis (\texttt{e.biffis@imperial.ac.uk}), Imperial College Business School, Imperial College London, South Kensington Campus, London SW7 2AZ, UK.}
\and Beniamin Goldys\footnote{Beniamin Goldys (\texttt{beniamin.goldys@sydney.edu.au}), School of Mathematics and Statistics, The University of Sydney, NSW 2006, Australia.}
\and Cecilia Prosdocimi\footnote{Cecilia Prosdocimi (\texttt{cecilia.prosdocimi@gmail.com}), LUISS "Guido Carli", Rome, Italy.}
\and Margherita Zanella\footnote{Margherita Zanella (\texttt{margherita.zanella@polimi.it}), Politecnico di Milano, Dipartimento di Matematica `Francesco Brioschi', Via Bonardi 13, 20133 Milano, Italy.}}

\date{\today}

\maketitle

\begin{abstract}
\noindent We consider the valuation of contingent claims with delayed dynamics in a
Black\&Scholes complete market model. We find a pricing formula that
can be decomposed into terms reflecting the market values
of the past and the present, showing how the valuation of future cashflows
cannot abstract away from the contribution of the past.
As a practical application, we provide an explicit expression for the market value of human capital in a setting with wage rigidity. The formula we derive has successfully been used to explicitly solve the infinite dimensional stochastic control problems addressed in \cite{BGP}, \cite{BiaGozZan} and \cite{DGZZ} .
\end{abstract}


\bigskip

\bigskip

\textbf{Key words}:
Stochastic functional differential equations, delay equations, no-arbitrage pricing, human capital, sticky wages.

\bigskip

\bigskip

\noindent

\textbf{AMS classification}:
34K50, 91B25, 91G80.

\tableofcontents

\section{Introduction}
It is a standard result in asset pricing theory that the absence of arbitrage opportunities
is essentially equivalent to the existence of an equivalent probability measure 
under which
the price of any contingent claim is a local martingale after deflation by
the money market account; see \cite{DS1994,HK1979,HP1981}. In this paper we preserve the standard formulation of arbitrage pricing in a complete market model with security prices evolving as geometric Brownian motions (GBM). The main novelty of our work is that we consider contingent claims that have dynamics described by a stochastic delay differential equation (SDDE). 
\par
 It is perhaps surprising that using the no-arbitrage
pricing machinery we are able to derive an explicit valuation formula in the case of dynamics with memory, which is notoriously difficult to study. In particular, we show that the price can be decomposed into a term related
to the `current market value of the past' (in a sense to be made precise below),
and a term reflecting the `market value of the present'. 
In our setting the contribution of the past is represented
by the portion of a contingent claim's past trajectory that shapes its dynamics
going forward.\footnote{The importance of the past in understanding
the qualitative feature of a model with delay was also emphasized in Fabbri and Gozzi \cite{FG2008},
although in a deterministic setting, when solving the endogenous growth
model with vintage capital of Boucekkine {\it et al.} \cite{boucekkine2005vintage}. }
Using our pricing formula, we demonstrate that 
in the market consistent valuation of future cashflows the contribution of the past cannot be neglected.
\par
As a practical application of our results, we consider in detail the case in which the 
contingent claim represents stochastic wages received by an agent
over her lifetime (e.g., \cite{DL2010,BGP}). 
 It is well known that when labor income
is spanned by tradable assets, the market value of human capital can be easily
derived via risk-neutral valuation. 
The empirical literature on labor income dynamics widely relies on auto-regressive moving average (ARMA) processes  (e.g., \cite{MaCurdy1982}, \cite{AbowdCard1989}, \cite{Hubbard1995}, \cite{meghir2004income}); however
\cite{Reiss2002}, \cite{Lorenz2005}, and \cite{DGT2015} show how SDDEs can be understood as the weak limit of discrete time ARMA processes.
Therefore specific classes of  SDDEs can be chosen as  models for labor income, since they are continuous time counterparts of the empirically validated ARMA models.	
We thus consider the introduction of delayed drift and volatility coefficients
in a GBM labor income model to provide a tractable example of 
wage dynamics that adjusts slowly to financial market shocks.
We obtain a closed form solution for human capital, which makes
explicit the contributions of the market value of the past and the present. Our results
demonstrate that SDDEs are valuable modeling tools that can address the findings 
of a large body of empirical literature on wage rigidity (e.g., 
\cite{khan1997}, \cite{dickens2007}, \cite{barattieri2010}, \cite{lebihan2012}). 
Moreover, our results become an essential ingredient in finding explicit solutions of an interesting class of agent's lifecycle portfolio choice problems with no-borrowing-without repayment constraint, where the agent receive labor income whose dynamics is path dependent (see Section \ref{Applications}).

Although we discuss the human capital application extensively, the  
extension to other applications is immediate. For instance, we provide some references 
to the literature on counterparty risk and derivatives valuation, in which analogous dynamics 
arise in the context of collateralization procedures entailing a delay in the marking-to-market
procedure of over-the-counter derivatives (e.g., \cite{brigo2013counterparty,BrigoPallavicini2014}).
\par
It should be noted that no-arbitrage pricing in the case of delayed price dynamics 
has been recently studied by many authors, see for example \cite{arriojas2007delayed,mao2013delay}. Their focus however is on proving completeness of the market, hence very different from ours. On the other hand, their work suggests that our results are of broader applicability, in particular to settings where market completeness is preserved, such as the case in which tradable assets have delayed drift and volatility terms (we refer to Section \ref{sec:conclusions} for more comments). 
\par

The paper is organized as follows. In Section \ref{Section_Mathematical_tools} we introduce the
setup and state our main result. In Section~\ref{proof_section} we prove   the result. 
 In Section~\ref{Applications} we provide some applications.
Section~\ref{sec:conclusions} concludes. In the Appendix we present the proof of some more technical results.

\begin{notation}
Given $a, b \in \mathbb{R}^n$, by $a \cdot b$ we denote the scalar product in $\mathbb{R}^n$ and by $\| \cdot\|$ we denote the corresponding norm. Given a signed measure of bounded variation $\phi$ on $[-d,0]$, by $|\phi|$ we denote its total variation.
\newline
The space of Lebesgue square integrable (deterministic) functions on $[-d,0]$ is denoted by $L^2(-d,0;\mathbb{R})$, that is the measure ${\rm d}t$ is understood. By $W^{1,2}(-d,0;\mathbb{R})$ we denote the Sobolev space of weakly differentiable square integrable functions. 
Given $\phi, \psi \in L^2(-d,0;\mathbb{R}))$, by $\langle\cdot, \cdot\rangle$ we denote the usual $L^2$ scalar product.
\\
If functions  $a,b\ge 0$ satisfy the inequality $a \le C(A) b$ with a constant $C(A)>0$ depending on the expression $A$, we write $a \lesssim_A b$.\end{notation}

\section{Setup and statement of the main result}
\label{Section_Mathematical_tools}
We work in the framework of a Black and Scholes complete market model. On the filtered probability space $(\Omega, \mathcal F, \mathbb F, \mathbb P)$ we consider the $\mathbb F$-adapted vector valued process $(S_0,S)$, representing the price evolution of a money market account, $S_0$,
and $n$ risky assets, $S=(S_1,...,S_n)^\top$, whose dynamics is the following 
\begin{equation}
\label{DYNAMIC_MARKET}
\begin{cases}
\text{d}S_0(t)= S_0(t) r  \text{d}t,\\
\text{d}S(t) =\text{diag}(S(t)) \left(\mu \text{d}t + \sigma \text{d}Z(t)\right),
\\
S_0(0)=1,
\\
S(0)\in  {\mathbb R}^n_{+}.
\end{cases}
\end{equation}
Here $Z$ is an $n$-dimensional Brownian motion and we assume that $\mathbb F:=\{\mathcal F_t\}_{t\geq0}$ is the filtration
generated by $Z$, and enlarged with the
$\mathbb P$-null sets. $\mu \in \mathbb R^n$, and the matrix $\sigma \in  \mathbb R^{n \times n}$ is invertible.

We are interested in the valuation of a payment stream represented by the $\mathbb F$-adapted process $y$. The payment stream can be thought of as capturing the mark-to-market process of a trading account, 
the flow of profits and losses from a trading strategy, the collateral flows arising from an over-the-counter derivative transaction, or the labor income received
by an agent over time. 
We assume the  payment stream $y$ to obey the following stochastic differential equation with delay in both the drift and the diffusion terms:
 \begin{equation}\label{DYN_LABOR_INCOME_DELAY_I}
\begin{cases}
\text{d}y(t) = &\left[ y(t) \mu_y+\int_{-d}^0 y(t+s)  \phi(\text{d}s) \right] \text{d}t \\
 &+\left[ y(t)\sigma_y + \begin{pmatrix}
 \int_{-d}^0 y(t+s)  \varphi_1(\text{d}s) \\
 \vdots
 \\
 \int_{-d}^0y(t+s) \varphi_n(\text{d}s)
\end{pmatrix}    \right] \cdot \text{d}Z(t),
\\
y(0)=& x_0, 
\\
y(s)=& x_1(s )  \quad \mbox{  for $s \in [-d,0)$}.
\end{cases}
 \end{equation}
Here $\mu_y \in \mathbb R$, $\sigma_y \in \mathbb R^n$,
  $\phi,\varphi_i$, for $i=1,\dots,n$, 
 are signed measures of bounded variation on $\left[-d,0\right]$
 and 
 $x_0 \in \mathbb R$, $x_1 \in  L^2 \big( [-d,0]; \mathbb R\big) $.
 The income stream $y$
 provides a simple, tractable example of income dynamics adjusting slowly to financial market shocks; the measures $\phi,\varphi_1,...\varphi_n$ can be thought as representing the impact that the past has on the present value of the payment stream $y$. 
 
Existence of a unique solution of \eqref{DYN_LABOR_INCOME_DELAY_I} is ensured by the following result.
 \begin{proposition}
 \label{well_pose}
For any given initial datum $(x_0,x_1)\in \mathbb{R} \times L^2(-d,0;\mathbb{R})$ equation \eqref{DYN_LABOR_INCOME_DELAY_I} admits a unique strong (in the probabilistic sense) solution $y\in L^2(\Omega \times [0,T] )$, for all $T>0$, with $\mathbb{P}$-a.s. continuous paths. 
\end{proposition}
Proof of Proposition \eqref{well_pose} is given in Appendix \ref{app0}.

Since the market is complete and the stream process $y$ is spanned by the stock $S$, a certainty equivalent value for the stream of future (stochastic) payments can be found. It is in fact well known that in a complete market, the value of future uncertain cash flows is obtained by discounting their expected value under the unique risk neutral measure. This value is 
\begin{eqnarray}\label{HUMAN_CAPITAL}
H\left(t_0\right):=\xi(t_0)^{-1} \mathbb E\left[ \int_{t_0}^{+ \infty} \xi(s) y(s) \text{d}s \Big\vert \mathcal F_{t_0}\right],
\end{eqnarray}
where the process $\xi$ represents the unique stochastic discount factor whose time evolution is described (see e.g. \cite{Duf2001}) by the following stochastic differential equation:
\begin{equation}\label{DYN_STATE_PRICE_DENSITY}
\left\{\begin{array}{ll}
\text{d} \xi (t)& = - \xi(t)r  \text{d}t  -\xi(t) \kappa \cdot \text{d}Z(t)\\
\xi(0)&=1.
\end{array}\right.
\end{equation}
The constant $\kappa$ that appears in the above equation, represent the market price of risk. It is known that in a Black and Scholes market model 
\begin{equation}\label{DEF_KAPPA}
\kappa= (\sigma^\top)^{-1} (\mu- r \mathbf 1),
\end{equation}
where by $\mathbf 1 = (1,\dots, 1)^\top$ we mean the unitary vector in $\mathbb R^n$.

Our aim is to provide an explicit formula for the quantity $H(t_0)$ given in \eqref{HUMAN_CAPITAL}. The challenging aspect of the problem lies  in the fact that we consider a payment stream whose dynamics is path dependent. This makes the problem considerably different from the ones studied in the literature and much harder to prove.

Now we have all the ingredients to state the main result of the paper. Throughout the paper we will let the following assumption be in force

\begin{hypothesis}
\label{HYP_POSITIVITY_PHI_K}
\begin{itemize}
\item [(i)] The measure $\Phi$ on $[-d,0]$ defined as 
\begin{equation}
\label{DEF_PHI}
\Phi(\cdot):=\phi(\cdot)-(\varphi_1(\cdot),...,\varphi_n(\cdot)) \cdot \kappa
\end{equation}
is a signed measure of bounded variation.
\item [(ii)] 
\begin{equation}
\label{EQ_POSITIVITY_K}
r-\mu_0+\sigma_y \cdot \kappa - \int_{-d}^0 e^{r\tau}|\Phi|({\rm d} \tau)>0.
\end{equation}
\end{itemize}
\end{hypothesis}

Assumption \ref{HYP_POSITIVITY_PHI_K} will be explained in details in Section \ref{Explanation}.

\begin{theorem}
\label{TEO_EXPLICIT_FORM_HUMAN_CAPITAL}
Under Assumption \ref{HYP_POSITIVITY_PHI_K}, for any $t_0\ge 0$, the quantity $H(t_0)$ defined in \eqref{HUMAN_CAPITAL} has the following explicit form 
\begin{equation}\label{SOLUTION}
  H(t_0)=
  \frac{1}{K} \left( y(t_0)+ \int_{-d}^0  G(s) y(t_0+s)\, {\rm d}s\right),\quad\mathbb P-a.s.,
\end{equation}
where $y(t_0)$ denotes the solution at time $t_0$ of equation \eqref{DYN_LABOR_INCOME_DELAY_I},
\begin{equation*}
\label{def_K}
K:=r-\mu_0+\sigma_y \cdot \kappa - \int_{-d}^0 e^{r\tau}\Phi({\rm d} \tau), 
\end{equation*}
and $G$ is given by
\begin{equation*}
\label{DEF_G}
G(s) : = \int_{-d}^s   e^{-r(s-\tau)} \Phi ({\rm d}\tau).
\end{equation*}
\end{theorem}
In expression \eqref{SOLUTION}, we can recognize an annuity factor, $K^{-1}$, multiplying
a term representing current value of $y$, 
and a term representing the current market value of the past trajectory of $y$ over
the time window $(t_0-d,t_0)$. The `market value
of the past' trades off the returns on the payment stream against its exposure to financial risk,
as can be seen from expression \eqref{DEF_PHI}.
When the delay terms in the drift and volatility coefficients vanish, the valuation 
of the payment stream reduces to $K^{-1}y(t_0)$.

\begin{remark}\label{remark:bounded_horizon}
The setup described above can be extended to the case of payments over a bounded horizon in some interesting situations.
\\
When the payment stream is received until an exogenous Poisson stopping time $\tau_\delta$ (representing for example death or irreversible unemployment, in the case payment stream $y$ represents the labor income), expression \eqref{SOLUTION} still applies, provided discounting is carried out at rate $r+\delta$ instead of $r$,
where $\delta>0$ represents the Poisson parameter. This extension has been already considered in \cite{BGP}, \cite{BiaGozZan} and \cite{DGZZ}.
\\
The case in which payments are received until a finite (deterministic)  time (representing for example permanent retirement from the labor market, in the case payment stream $y$ represents the labor income) has been addressed in \cite{BCGZ} (see Proposition 3.4). 
\end{remark}

\section{Proof of the result}
\label{proof_section}

Within this Section we assume that \eqref{DYN_LABOR_INCOME_DELAY_I} has a unique continuous $\F$-adapted solution $y$, as ensured by Proposition \ref{well_pose}.
The proof of Theorem \ref{TEO_EXPLICIT_FORM_HUMAN_CAPITAL} can be divided in the following steps:
\begin{itemize}
\item incorporate the discount factor $\xi$ in the equivalent risk-neural probability measure $\tilde{\mathbb{P}}$ and rewrite the dynamics of $y$ under $\tilde{\mathbb{P}}$. Derive the deterministic delayed equation satisfies by the quantity $\tilde{\mathbb{E}}[y(t)|\mathcal{F}_{t_0}]$.  (Subsection \ref{subsec:equiv_P}).
\item Rewrite the delayed equation for $\tilde{\mathbb{E}}[y(t)|\mathcal{F}_{t_0}]$ as a deterministic evolution equation taking values in a suitable Hilbert space that incorporate in its structure the past and the present. We will appeal to the so-called \textit{product-space framework} for path-dependent equations (Subsection \ref{subsec:Hilbert}).
\item Exploit suitable spectral properties of the operator that appears in the above mentioned infinite-dimensional formulation of the problem, to obtain expression \eqref{SOLUTION} for $H(t_0)$ (Subsections \ref{subsec:spectral} and \ref{subsec:formula_HC}). 
\item Clarify the relation between the used spectral properties and our Assumption \ref{HYP_POSITIVITY_PHI_K} (Subsection \ref{Explanation}).
\end{itemize}

The above first three steps will lead to Proposition \ref{step1}, whereas the last step will be formalized in Lemmas \ref{lem_k} and \ref{spectral_bounds}. Theorem \ref{TEO_EXPLICIT_FORM_HUMAN_CAPITAL} will then follow as an immediate consequence of these results.

For readability, the proof of some technical Lemmas in the following Sections will be postponed to the Appendix.

\subsection{Equivalent probability measure}
\label{subsec:equiv_P}

We find more convenient to incorporate the discount factor $\xi$ in the equivalent probability measure $\tilde{\mathbb{P}}$. It is in fact well known that, in a complete market model, pricing formulas can be obtained, in a equivalent way, working under the (unique) risk-neutral measure $\tilde{\mathbb{P}}$ or working under the objective measure $\mathbb{P}$ but discounting by the state price process $\xi$.

We start by considering the equivalent probability measure $\tilde \P_s$ on $\mathcal{F}_s$ such that 
\begin{footnote}
{Recall that the state price density $\xi$ characterizes the Radon-Nikodym derivative that defines the change of probability measure from objective $\mathbb{P}$ to risk-neutral measure $\tilde{\mathbb{P}}$ via the relation: $\xi(s)=e^{-rs}\rho(s)= e^{-rs}\frac{{\rm d}\tilde{\mathbb{P}}}{{\rm d}\mathbb{P}}(s)$.}
\end{footnote}

\begin{equation}
\label{Ptilde}
  \frac{\text{d}\tilde{\mathbb  P}_s}{\text{d} \mathbb P}=\exp\left(  -\frac{1}{2} |\kappa|^2  s   - \kappa\cdot Z(s)\right)     =  e^{rs}  \xi(s).
\end{equation}
By \cite[Lemma 3.5.3]{KARATZAS_SHREVE_91} we can write  
 \[  
 \mathbb E   \left[ \xi(s)  y(s) \mid \mathcal F_{t_0}  \right] = 
\xi(t_0) e^{-r(s-t_0)}  \tilde{\mathbb E}_s \left[  y(s) \mid \mathcal F_{t_0}  \right]  ,\]
and thus
\begin{footnote}{
Recall (see \eqref{HUMAN_CAPITAL}) that our aim is to evaluate $\mathbb E \left[ \int_{t_0}^{+\infty}   \xi(s)  y(s)  \text{d}s \mid \mathcal F_{t_0}    \right]$. We will prove that  $\int_{t_0}^{+\infty}  \mathbb E \left[ \xi(s)  y(s)\mid \mathcal F_{t_0}   \right]\text{d}s$ is equal to the r.h.s. of \eqref{SOLUTION} and then justify the equality  $\mathbb E \left[ \int_{t_0}^{+\infty}   \xi(s)  y(s)  \text{d}s \mid \mathcal F_{t_0}    \right]=\int_{t_0}^{+\infty}  \mathbb E \left[ \xi(s)  y(s)\mid \mathcal F_{t_0}   \right]\text{d}s$.}\end{footnote}
  \begin{equation}
    \label{EXPRESSION_II}
    \begin{aligned}
 \int_{t_0}^{+\infty}  \mathbb E \left[ \xi(s)  y(s)\mid \mathcal F_{t_0}   \right]\text{d}s
    =
    \xi(t_0) e^{rt_0} \int_{t_0}^{+\infty}  e^{-rs}  \tilde{\mathbb E}_s \left[  y(s)\mid \mathcal F_{t_0}   \right] \text{d}s.
  \end{aligned}
\end{equation}

The idea is to now understand what kind of differential equation the quantity $\tilde{\mathbb E}\left[  y(s)\mid \mathcal F_{t_0}   \right]=\tilde\E_s\left[  y(s)\mid \mathcal F_{t_0}   \right]$ satisfies.
Let $\tilde{\mathbb P}$ the measure such that $\left.\tilde{\mathbb P}\right|_{\mathcal F_s}=\tilde{\mathbb P}(s)$ for all $s\geq 0$.
By the Girsanov Theorem the process
$\tilde{Z}(t) = Z(t) + \kappa t$
is an $n$-dimensional Brownian motion under $ \tilde{\mathbb  P}$.
The dynamics of $y$ under $ \tilde{\mathbb  P}$ is then

\begin{eqnarray}
\label{y_tilde_P}
dy(s) =
 & \big[  (\mu_y - \sigma_y \cdot \kappa) y(s) + \int_{-d}^0  y(s+\tau) \,\Phi(\text{d}\tau)   \big] \text{d}s+\left[ y(t)\sigma_y  + \begin{pmatrix}
 \int_{-d}^0 y(s+\tau) \varphi_1(\text{d}\tau) \\
 \vdots
 \\
 \int_{-d}^0 y(s+\tau) \varphi_n(\text{d}\tau)
\end{pmatrix}    \right] \cdot  \text{d} \tilde{Z}(s),
  \end{eqnarray}
  where $\Phi$ is defined in \eqref{DEF_PHI}.
  Integrating between $t_0$ and $t$ we obtain
  \begin{align}
  \begin{split}
y(t)  = y(t_0) & + \int_{t_0}^t     (\mu_y - \sigma_y \cdot \kappa) y(s) \text{d}s
  +  \int_{t_0}^t     \int_{-d}^0 y(s+\tau)\Phi(\text{d}\tau)  \text{d}s\\
&  +   \int_{t_0}^t \left[ y(s)\sigma_y  + \begin{pmatrix}
 \int_{-d}^0 y(s+\tau) \varphi_1(\text{d}\tau)  \\
 \vdots
 \\
 \int_{-d}^0 y(s+\tau) \varphi_n(\text{d}\tau)
\end{pmatrix}    \right]  \cdot \text{d}\tilde{Z}(s)\,,
\end{split}
  \end{align}
and therefore, by taking the conditional expected value on both sides, we get
  \begin{align}\label{FORMULA_DIM_TEO_H_0_INTEGRAL X_XI}
  \begin{split}
   \tilde{\mathbb E} \left[  y(t)\mid \mathcal F_{t_0}   \right]
 =& y(t_0) +   (\mu_y - \sigma_y\cdot \kappa)   \tilde{\mathbb E} \left[   \int_{t_0}^t    y(s)  \text{d}s  \mid \mathcal F_{t_0} \right]
 \\
  &+     \tilde{\mathbb E} \left[   \int_{t_0}^t     \int_{-d}^0  y(s+ \tau) \Phi(\text{d} \tau)   \text{d}s \mid \mathcal F_{t_0}  \right] \\
  &+    \tilde{\mathbb E} \left[  \int_{t_0}^t \left[ y(s)\sigma_y  + \begin{pmatrix}
 \int_{-d}^0  y(s+\tau)\varphi_1(\text{d}\tau) \\
 \vdots
 \\
 \int_{-d}^0  y(s+\tau) \varphi_n(\text{d}\tau)
\end{pmatrix}    \right] \cdot \text{d}\tilde{Z}(s) \mid  \mathcal F_{t_0}  \right].
 \end{split}
  \end{align}

  The following  Lemma guarantees that
the stochastic integral with respect 
to $\tilde{Z} $ is a martingale, and has zero mean. 
The proof is provided in Appendix \ref{app1}
  
   \begin{lemma}\label{LEMMA_STOCHASTIC_INTEGRAL}
  It holds that
  \begin{eqnarray*}
   \tilde{\mathbb{E}}\left[  \int_{t_0}^t   \left \Vert y(s)\sigma_y  + \begin{pmatrix}
 \int_{-d}^0  y(s+\tau)\varphi_1({\rm d}\tau)\\
 \vdots 
 \\
 \int_{-d}^0 y(s+\tau) \varphi_n({\rm d}\tau)
\end{pmatrix}  \right\Vert ^2\, {\rm d} s  \right] < +\infty\,.
  \end{eqnarray*}
  \end{lemma}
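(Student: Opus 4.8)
The plan is to bound the $L^2$-norm of the integrand by $X_0(s)^2$ and the squared delayed averages, and then show each contributes a finite quantity after integration over $[t_0,t]$ and expectation. First I would use the elementary inequality $\|a+b\|^2\le 2\|a\|^2+2\|b\|^2$ in $\mathbf R^n$ to reduce the claim to showing that
\begin{equation*}
\mathbf E\int_{t_0}^t X_0(s)^2\,\text{d}s<+\infty
\quad\text{and}\quad
\mathbf E\int_{t_0}^t\sum_{i=1}^n\left(\int_{-d}^0 X_0(s+\tau)\varphi_i(\text{d}\tau)\right)^2\text{d}s<+\infty.
\end{equation*}
The first of these follows from the fact that $X_0$, being the first component of the mild/strong solution $X$ in $\mathcal H$ of the infinite-dimensional equation \eqref{INFINITE_DIMENSIONAL_STATE_EQUATION}, has $\sup_{s\in[0,t]}\mathbf E\|X(s)\|_{\mathcal H}^2<+\infty$; this is part of the standard existence theory (Theorem I.1 and Remark 4, Section I.3 of \cite{MOHAMMED_BOOK_96}, or the $\mathcal H$-valued formulation in \cite{CHOJNOWSKA-MICHALIK_1978}), and since $|X_0(s)|^2\le\|X(s)\|_{\mathcal H}^2$ pointwise is not quite right, I would instead invoke directly the moment bound $\mathbf E\!\left[\sup_{0\le s\le t}|X_0(s)|^2\right]<\infty$ that comes with the strong solution of the one-dimensional SFDE \eqref{DYN_LABOR_INCOME_DELAY_I}.

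For the second term, the key point is that each $\varphi_i$ is a finite signed measure on $[-d,0]$, so by Jensen's inequality (or Cauchy–Schwarz against the total variation measure $|\varphi_i|$),
\begin{equation*}
\left(\int_{-d}^0 X_0(s+\tau)\varphi_i(\text{d}\tau)\right)^2\le |\varphi_i|([-d,0])\int_{-d}^0 X_0(s+\tau)^2\,|\varphi_i|(\text{d}\tau).
\end{equation*}
Taking expectation, integrating over $s\in[t_0,t]$, and applying Tonelli's theorem to interchange the order of integration, the right-hand side is controlled by
\begin{equation*}
|\varphi_i|([-d,0])\int_{-d}^0\left(\int_{t_0}^t\mathbf E\,X_0(s+\tau)^2\,\text{d}s\right)|\varphi_i|(\text{d}\tau)
\le |\varphi_i|([-d,0])^2\sup_{u\in[t_0-d,\,t]}\mathbf E\,X_0(u)^2\cdot(t-t_0),
\end{equation*}
which is finite: on $[t_0-d,0)$ (if $t_0<d$) the process equals the deterministic initial datum $x_1\in L^2$, hence has finite second moment for a.e.\ $u$ and a bounded integral, while on $[0,t]$ finiteness of $\sup_u\mathbf E\,X_0(u)^2$ is again the moment estimate for the solution. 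Summing over $i=1,\dots,n$ and combining with the bound on the first term gives the claim.

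The only mildly delicate point — and the step I would treat most carefully — is the measurability and integrability needed to justify the Tonelli interchange and to make sense of $\int_{-d}^0 X_0(s+\tau)\varphi_i(\text{d}\tau)$ pathwise as a jointly measurable object in $(\omega,s)$; this is handled by the fact that $X_0$ has a continuous (hence jointly measurable, progressively measurable) modification on $[-d,\infty)$, with the initial segment given by the fixed $L^2$ function $x_1$. Everything else is an application of Jensen/Cauchy–Schwarz against finite measures together with the a priori second-moment bound for the solution of \eqref{DYN_LABOR_INCOME_DELAY_I}, so I do not anticipate a genuine obstacle here; the lemma is "technical" precisely because it is just bookkeeping to license the stochastic-integral manipulations used elsewhere in the proof.
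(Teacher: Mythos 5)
Your proof is correct and follows essentially the same route as the paper's: the $(a+b)^2\le 2(a^2+b^2)$ splitting, Cauchy--Schwarz (H\"older) against the measure $\varphi_i$, a Tonelli interchange, and the a priori bound $\mathbf E\bigl[\sup_{s\in[t_0,t]}|X_0(s)|^2\bigr]<\infty$ for the solution of \eqref{DYN_LABOR_INCOME_DELAY_I}. If anything, your version is slightly more careful than the paper's, since you run the Cauchy--Schwarz step against the total variation $|\varphi_i|$ (the $\varphi_i$ are only assumed to be signed measures) and you explicitly account for the deterministic initial segment when $s+\tau<0$.
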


We thus obtain that  
\begin{equation*}
 \tilde{\mathbb E} \left[  \int_{t_0}^t \left[ y(s)\sigma_y  + \begin{pmatrix}
 \int_{-d}^0  y(s+\tau)\varphi_1(\text{d}\tau) \\
 \vdots
 \\
 \int_{-d}^0  y(s+\tau) \varphi_n(\text{d}\tau)
\end{pmatrix}    \right] \cdot \text{d}\tilde{Z}(s) \mid  \mathcal F_{t_0}  \right]=0,
\end{equation*}
 and,
by definition of conditional mean and by Fubini's Theorem, the expression in
 \eqref{FORMULA_DIM_TEO_H_0_INTEGRAL X_XI} reduces to
  \begin{align}\label{FORMULA_DIM_TEO_H_0_INTEGRAL X_XI_II}
  \begin{split}
   \tilde{\mathbb E} \left[  y(t)\mid \mathcal F_{t_0}   \right]
= \ & y(t_0) +   (\mu_y - \sigma_y\cdot \kappa)    \int_{t_0}^t  \tilde{\mathbb E}  \left[   y(s)  \mid \mathcal F_{t_0} \right]\, {\rm d}s
 \\
  &+       \int_{t_0}^t     \int_{-d}^0 \tilde{\mathbb E} \left[ y(s+ \tau) \mid \mathcal F_{t_0}  \right]\,  \Phi(\text{d} \tau)   \text{d}s. 
   \end{split}
  \end{align}
  Therefore, defining 
  \begin{equation}
  \label{def_M0}
  M_{t_0}(t):=\tilde{\mathbb{E}}\left[y(t)|\mathcal{F}_{t_0}\right],
  \end{equation}
   we have that $M_{t_0}$ satisfies for $t\geq t_0$ the equation (with random initial conditions)  
   \begin{equation}\label{eq1_I}
\begin{cases}
{\rm d}M_{t_0}= [(\mu_y - \sigma_y \cdot \kappa)   M_{t_0}(t)+\int_{-d}^0   M_{t_0}(t+s)\, \Phi(\text{d} s)]\, {\rm d}t,
\\
M_{t_0}(t_0)=y(t_0),& 
\\
M_{t_0}(t_0+s)=y(t_0+s), \ \quad \qquad s \in[-d,0). 
\end{cases}
\end{equation}
Existence of a unique solution of the above system is guaranteed by the following generalization of   \cite[Part II, Chapter 4, Theorem 3.2]{BENSOUSSAN_DAPRATO_DELFOUR_MITTER} to random initial conditions.
\begin{lemma}
  Given any fixed $\mathcal{F}_{t_0}$-measurable $\mathbb{R} \times L^2([-d,0];\mathbb{R})$-valued random variable $m=(m_0,m_1)$, the Cauchy problem 
  \begin{equation}
  \label{new}
\begin{cases}
{\rm d} m(t_0;t)= [(\mu_y - \sigma_y \cdot \kappa)   m(t_0;t)+\int_{-d}^0   m(t_0;t+s)\, \Phi({\rm d} s)]\, {\rm d}t,
\\
m(t_0;t_0)=m_0,& 
\\
m(t_0;t_0+s)=m_1(s), \ \quad \qquad s \in[-d,0). 
\end{cases}
\end{equation}
   admits a unique absolutely continuous solution. Moreover, system \eqref{new} is equivalent to \eqref{eq1_I} when we choose $(m_0,m_1)=(y(t_0),y(t_0+\cdot))$.
   \end{lemma}
\subsection{Reformulation of the problem in an infinite-dimensional framework}
\label{subsec:Hilbert}

We now reformulate the differential equation with delay \eqref{eq1_I} as an evolution equation with values in the so called Delfour-Mitter Hilbert space, defined as \begin{equation*}
\mathcal{H}:=\R\times L^2(-d,0;\R),
\end{equation*}
whose elements are denoted as $ x=(x_0,x_1)$. $\mathcal{H}$ is a Hilbert space when endowed with the inner product $\langle (x_0,x_1),(y_0,y_1)\rangle_{\mathcal{H}}=x_0y_0+\langle x_1,y_1\rangle$, the latter being the usual inner product of $L^2(-d,0;\R)$.

We define the operator $A:\mathcal{D}(A) \subset \mathcal{H} \rightarrow \mathcal{H}$ as
\begin{equation*}
\mathcal D\left(A\right):=\left\{(x_0,x_1) \in \mathcal H: x_1(\cdot) \in W^{1,2}\big( [-d, 0]; \mathbb R \big),\, x_0 = x_1(0)\right\},
\end{equation*}
\begin{equation}
\label{A}
A(x_0,x_1):=\Big(
 (\mu_0 - \sigma_0\cdot  \kappa) x_0+\int_{-d}^0  x_1(s)\Phi(\text{d}s),
\frac{\text{d}}{\text{d}s} x_1
\Big),
\end{equation}
with $\Phi$ defined in \eqref{DEF_PHI}.

We can then reformulate equation \eqref{eq1_I} as an evolution equation in $\mathcal{H}$. 

Consider, again for any fixed $\mathcal{F}_{t_0}$-measurable $\mathcal{H}$-valued random variable $\mathbf{m}=\left(m_0,m_1\right)$, the $\mathcal{H}$-valued process $\mathbf{M}(t_0;\cdot)$ that is the solution on $[t_0,+\infty)$ of
\begin{equation}\label{INFINITE_DIMENSIONAL_STATE_EQUATION}
\begin{cases}
{\rm d} \mathbf{M}(t_0;t) = A \mathbf{M}(t_0;t) {\rm d}t,\\
\mathbf{M}(t_0;t_0) =\mathbf{m}.
\end{cases}
\end{equation}
We collect in the following Proposition some useful results about the above equation (for more details see e.g. \cite[Appendix A]{DAPRATO_ZABCZYK_RED_BOOK}).
\begin{proposition}
  \label{prop_semigroup}
  \begin{enumerate}[label=$(\roman{*})$]
  \item\label{item:1} The operator $A$ generates a strongly continuous semigroup $\left\{S(t)\right\}_{t\geq 0}$ in $\mathcal{H}$.
  \item $S(t)$ is a compact operator for every $t\geq d$.
  \item For every $\mathcal{F}_{t_0}$-measurable $\mathcal{H}$-valued random variable $m$ the process
\begin{equation}
  \label{semigroup_1}
S(t-t_0)\mathbf{m} ;
\end{equation}
is the unique weak (in distributional sense) solution of (\ref{INFINITE_DIMENSIONAL_STATE_EQUATION}); in particular
\begin{equation}
  \label{semigroup_2}
 \mathbf{M}(t_0;t)=\mathbf{M}(0;t-t_0)\ .
\end{equation}
\item The Cauchy problem (\ref{INFINITE_DIMENSIONAL_STATE_EQUATION}) is equivalent to (\ref{new}).
\end{enumerate}
\end{proposition}
\begin{proof}
  \begin{enumerate}[label=$(\roman{*})$]
\item 
The operator $A$ can be written in the form
\begin{equation}\label{eq_gen}
A\left(x_0,x_1\right)=\left(\int_{-d}^0 x_1(\theta)a(\text{d}\theta), \frac{\text{d}}{\text{d}s}x_1\right)\,,
\end{equation}
where
\[a(\text{d}\theta)=\mu_y\delta_0(\text{d}\theta)+\Phi(\text{d}\theta)\,,\]
and $\delta_0$ is the delta-Dirac measure at zero.
The measure $a$ defines a finite measure on $\left[-d,0\right]$. The result is thus an immediate consequence of \cite[Proposition A.27]{DAPRATO_ZABCZYK_RED_BOOK}.
\item See e.g. \cite[Chapter 7, Lemma 1.2]{HALE_VERDUYN_LUNEL_BOOK}. 
\item Existence and uniqueness of a weak solution given by (\ref{semigroup_1}) for deterministic $\mathbf{m}$ is a classical result (see \cite[Proposition A.5]{DAPRATO_ZABCZYK_RED_BOOK}). 
One can then easily generalize the result to random $\mathbf{m}$. Property (\ref{semigroup_2}) follows from uniqueness of the solution.

\item\label{it:equiv} If $m(t_0;\cdot)$ is the unique solution to (\ref{new}) then the $\mathcal{H}$-valued process $\left(m(t_0;t),m(t_0;t+\cdot)\right)_{t\geq t_0}$ solves (\ref{INFINITE_DIMENSIONAL_STATE_EQUATION}) by \cite[Part II, Chapter 4, Theorem 4.3]{BENSOUSSAN_DAPRATO_DELFOUR_MITTER}. Since also the latter has a unique solution, its first component must be the solution to (\ref{INFINITE_DIMENSIONAL_STATE_EQUATION}).
\end{enumerate}
\end{proof}

As an immediate consequence of the above result we obtain the desired equivalence between equations \eqref{INFINITE_DIMENSIONAL_STATE_EQUATION} and (\ref{eq1_I}).\begin{corollary}
\label{item:5} 
Let $y$ be a solution of (\ref{DYN_LABOR_INCOME_DELAY_I}) on $[0,t_0]$; when choosing $\mathbf{m}$ as $(m_0,m_1)= (y(t_0), y(t_0+\cdot))$, \eqref{INFINITE_DIMENSIONAL_STATE_EQUATION} is equivalent to (\ref{eq1_I}) and in this case we have
\begin{equation*}
\mathbf{M}(t_0;t)=S(t-t_0)\mathbf{m}
=\left(m(t_0;t),m(t_0;t+\cdot)\right)
=\left(M_{t_0}(t),\left\{M_{t_0}(t+s)\right\}_{s\in[-d,0]}\\\right)\ .
\end{equation*}
\end{corollary}

From now on we thus will work with formulation \eqref{INFINITE_DIMENSIONAL_STATE_EQUATION}. The spectral properties of the operator $A$, that appears in this infinite-dimensional formulation, will be crucial to prove our result. We devote the next Section to the analysis of these properties.

\subsection{Spectral properties of $A$}
\label{subsec:spectral}
In the present Section we collect some technical results concerning the spectral properties of the operator $A$. Proof of Theorem \ref{TEO_EXPLICIT_FORM_HUMAN_CAPITAL} is based on the Lemmas presented here. The technical proofs are postponed to the Appendix.

\begin{lemma}
  \label{lemma_speck}
The spectrum of the operator $A$ is given by
\begin{equation*}
\sigma(A)=\{\lambda \in \mathbb{C}:K(\lambda)=0\},
\end{equation*}
where
\begin{equation}
\label{DEF_K_LAMBDA}
K(\lambda):=\lambda -  (\mu_y- \sigma_y \cdot \kappa) - \int_{-d}^0   e^{\lambda \tau}  \Phi ({\rm d}\tau)\,,\quad\lambda\in\mathbb C\,.
\end{equation}
The spectrum $\sigma\left(A\right)$ is a countable set and every $\lambda\in\sigma\left(A\right)$ is an isolated eigenvalue of finite multiplicity. 

The spectral bound of $A$ is 
\begin{equation}\label{l0}
\lambda_0=\sup\left\{\mathrm{Re}\,\lambda:\, K(\lambda)=0\right\}.
\end{equation}
\end{lemma}
\begin{proof}
See \cite[Chapter 7, Lemma 2.1 and Theorem 4.2]{HALE_VERDUYN_LUNEL_BOOK}\end{proof}
We ca explicitly compute the resolvent operator of $A$.

\begin{lemma}\label{LEMMA_RESOLVENT_SET_RESOLVENT_BAR_A}
Let $\rho(A)$ denote the resolvent set of $A$ and let $\lambda\in\mathbb R\cap\rho\left(A\right)$.
The resolvent operator of $A$ at $\lambda$, denoted by $R(\lambda,A)$ is given by
\begin{equation}
R(\lambda, A)  \left(m_0,
m_1\right)=
 \left(
u_0,
u_1\right)
\end{equation}
with

\begin{multline}
\begin{split}
u_0&=\frac{1}{K(\lambda)}\left[m_0+ \int_{-d}^0\int_{-d}^{s}e^{-\lambda (s-\tau)}\Phi({\rm d}\tau) \, m_1(s)  {\rm d} s\right] ,\\
u_1(s)&=
u_0e^{\lambda s}
+\int_s^0e^{-\lambda(\tau-s)}m_1(\tau)\, {\rm d}\tau.
\end{split}
\end{multline}

\end{lemma}

\begin{proof}
See Appendix \ref{app2}.
\end{proof}

\begin{lemma}
\label{lem_laplace}
For every real $\lambda$ such that $\lambda>\lambda_0$ and every $\mathbf{m}=(m_0,m_1)\in\mathcal{H}$ we have
    \begin{equation}
      \label{LAPLACE_SEMIGROUP}
  \int_0^{+\infty}e^{-\lambda t}S(t)  \mathbf{m}\, {\rm d}t=R(\lambda,A)\mathbf{m}.
\end{equation}
\end{lemma}
\begin{proof}
 Identity (\ref{LAPLACE_SEMIGROUP}) is well known to hold for all real $\lambda$ larger than the type of $S(t)$. Since $S(t)$ is compact for every $t\geq d$, its type is actually equal to its spectral radius $\lambda_0$. For a reference see e.g. \cite[Part II, Chapter 1, Corollary 2.5]{BENSOUSSAN_DAPRATO_DELFOUR_MITTER}.
\end{proof}

\subsection{Deriving the explicit formula for $H$}
\label{subsec:formula_HC}
In this Section we exploit the results derived in the above Sections to prove the following.
\begin{proposition}
\label{step1}
Assume $r>\lambda_0$, then for any $t_0\ge 0$, the quantity $H(t_0)$ defined in \eqref{HUMAN_CAPITAL} has the following explicit form 
\begin{equation*}
  H(t_0)=
  \frac{1}{K} \left( y(t_0)+ \int_{-d}^0  G(s) y(t_0+s)\, \text{d}s\right),\quad\mathbb P-a.s.,
\end{equation*}
where $y(t_0)$ denotes the solution at time $t_0$ of equation \eqref{DYN_LABOR_INCOME_DELAY_I},
\begin{equation}
\label{def_K}
K:=r-\mu_0+\sigma_y \cdot \kappa - \int_{-d}^0 e^{r\tau}\Phi({\rm d} \tau), 
\end{equation}
and $G$ is given by
\begin{equation}
\label{DEF_G}
G(s) : = \int_{-d}^s   e^{-r(s-\tau)} \Phi ({\rm d}\tau).
\end{equation}
\end{proposition}
\begin{remark}
 Notice that the statement of the above result is the same of Theorem \ref{TEO_EXPLICIT_FORM_HUMAN_CAPITAL}, but the assumptions here are different: we assume $r>\lambda_0$ instead of Assumption \ref{HYP_POSITIVITY_PHI_K}. An explanation of why we do actually consider  Assumption \ref{HYP_POSITIVITY_PHI_K} in Theorem \ref{TEO_EXPLICIT_FORM_HUMAN_CAPITAL}, will be provided in the next Section.
 \end{remark}
 \begin{proof} 
Let $\mathbf{m}=(m_0,m_1)= (y(t_0), y(t_0+\cdot))$. We denote here by $\Pi$ the projection on the first (finite-dimensional) component of $\mathcal{H}$, i.e. $ \Pi[ \mathbf{m}] =\Pi[(m_0,m_1)]=m_0$.

Starting from \eqref{EXPRESSION_II}, we have
\begin{align}
\label{final}
 \frac{1}{\xi(t_0)}\int_{t_0}^{+\infty}  \mathbb E& \left[ \xi(s)  y(s)\mid \mathcal F_{t_0}   \right]\text{d}s
   =e^{rt_0}\int_{t_0}^{\infty}e^{-rs}\tilde{\mathbb{E}}\left[y(s)|\mathcal{F}_{t_0}\right]\, {\rm d}s
&\text{(by (\ref{EXPRESSION_II}))}
\notag \\
  &=e^{rt_0}\int_{t_0}^{\infty}e^{-rs}M_{t_0}(s)\, {\rm d}s &\text{(by \eqref{def_M0}}
 \notag \\  
  &=e^{rt_0}\int_{t_0}^{\infty}e^{-rs}\Pi\left[\mathbf{M}(t_0;s)\right]\, {\rm d}s&\text{(by Corollary \ref{item:5})}
 \notag \\
 &=e^{rt_0}\int_{0}^{\infty}e^{-rt_0}e^{-rs}\Pi\left[\mathbf{M}(0;s)\right]\, {\rm d}s&\text{(by (\ref{semigroup_2}))}
\notag \\  
 &= \int_0^{\infty}e^{-rs}\Pi\left[S(s)\mathbf{m}\right]\,{\rm d}s&\text{(by (\ref{semigroup_1}))}
 \notag \\
  &=\Pi\left[R(r,A)\mathbf{m}\right]&\text{(by Lemma \ref{lem_laplace}, since $r>\lambda_0$)}
  \notag \\
 &= \frac{1}{K(r)}\left[y(t_0)+\int_{-d}^0\int_{-d}^{s}   e^{-r(s- \tau)} \Phi(\text{d}\tau) \, y(t_0+s) \text{d} s\right]
&\text{(by Lemma \ref{LEMMA_RESOLVENT_SET_RESOLVENT_BAR_A}).}
 \end{align}
 From the above equalities we infer, in particular, the $\mathbb{P}$-integrability of $ \int_{t_0}^{+\infty}   \mathbb E [ \xi(s)  y(s)   \mid \mathcal F_{t_0}    ]\,{\rm d}s$, which justifies 
 the equality
 \begin{equation}
 \label{final2}
 \mathbb E \left[ \int_{t_0}^{+\infty}   \xi(s)  y(s)  \text{d}s \mid \mathcal F_{t_0}    \right]=\int_{t_0}^{+\infty}  \mathbb E \left[ \xi(s)  y(s)\mid \mathcal F_{t_0}   \right]\text{d}s.
 \end{equation} 
 In fact, by the characteristic property of the conditional mean,
 and Fubini's Theorem we have that, 
for any $F \in  \mathcal F_{t_0}$ 
\begin{align*}
\int_F   \int_{t_0}^{+\infty}   \mathbb E [ \xi(s)  y(s)   \mid \mathcal F_{t_0}    ]\,{\rm d}s \,  {\rm d}\mathbb P 
&=  \int_{t_0}^{+\infty} \int_F  \mathbb E [ \xi(s)  y(s)   \mid \mathcal F_{t_0}    ] {\rm d}\mathbb P \, {\rm d}s
\\
& =  \int_{t_0}^{+\infty}   \int_F    \xi(s)  y(s) \, {\rm d} \mathbb P \, {\rm d}s
 =   \int_F     \int_{t_0}^{+\infty}   \xi(s)  y(s) \, {\rm d}s  \, {\rm d} \mathbb P  \\
&=\int_F      \mathbb E \left[\int_{t_0}^{+\infty}   \xi(s)  y(s)\, {\rm d}s \mid \mathcal F_{t_0}     \right]    \,   {\rm d} \mathbb P.
\end{align*}

Defining now $K:=K(r)$ and recalling \eqref{HUMAN_CAPITAL}, \eqref{def_K} and \eqref{DEF_G}, by \eqref{final} and \eqref{final2}, the result immediately follows.
\end{proof}
\subsection{Motivations for Assumption \ref{HYP_POSITIVITY_PHI_K}}
\label{Explanation}
In Proposition \ref{step1} we proved our main result under the Assumption $r>\lambda_0$. This requirement is difficult to check in practice, since it requires an explicit computation of the spectral bound $\lambda_0$. In the present Section we therefore look for some sufficient conditions easier to check. 


Set for $\lambda \in \mathbb{C}$
\begin{equation}
\label{tilde_K}
\widetilde K(\lambda):=\lambda -  (\mu_y- \sigma_y \cdot \kappa) - \int_{-d}^0   e^{\lambda \tau}  |\Phi |(\text{d}\tau),
\end{equation}
where by $|\Phi|$ we denote the total variation measure of $\Phi$.
Set 
\begin{equation}\label{l1}
\widetilde{\lambda_0}=\sup\left\{\mathrm{Re}\,\lambda:\, \widetilde K(\lambda)=0\right\}.
\end{equation}
We note that $\widetilde{\lambda_0}$ is the spectral radius of the operator $\widetilde{A}: \mathcal{D}(\widetilde{A})\subset \mathcal{H} \rightarrow \mathcal{H}$ defined as follows:
\begin{equation*}
  \begin{gathered}
\mathcal{D}(\widetilde{A}):= \left\{ \left(x_0,x_1\right) \in \mathcal{H}: x_1\in W^{1,2}(-d,0;\mathbb{R}), \ x_1(0)=x_0\right\},
\\
\widetilde{A}\left(x_0,x_1\right):= \left((\mu_y-\sigma_y\cdot \kappa)x_0+ \int_{-d}^0 x_1(s)|\Phi|({\rm d}s), \frac{{\rm d}}{{\rm d}s} x_1\right).
\end{gathered}
\end{equation*}

\begin{lemma}
\label{lem_k}
 The function $\widetilde K$, restricted to the real numbers, is strictly increasing and the spectral bound $\widetilde{\lambda_0}$ is the only real root of the equation
$\widetilde K(\xi)=0$. 
In particular, 
\begin{equation}
\label{iff_condition}
\widetilde K(r)>0 \ \iff   r>\widetilde{\lambda_0}.
\end{equation}
\end{lemma}
\begin{proof}
See Appendix \ref{app3}.
\end{proof}

Recall the definition of $K$ given in \eqref{DEF_K_LAMBDA} and the definition of the spectral bound of $A$, $\lambda_0$, given 
\eqref{l0}.

\begin{lemma}
\label{spectral_bounds}
It holds
\begin{equation*}
\widetilde{\lambda_0}\ge\lambda_0.
\end{equation*}
\end{lemma}
\begin{proof}
See Appendix \ref{app4}.
\end{proof}

Thanks to the above two Lemmas it becomes now clear why we work under Assumption \ref{HYP_POSITIVITY_PHI_K} in Theorem 
\ref{TEO_EXPLICIT_FORM_HUMAN_CAPITAL}. It provides a sufficient condition for the condition $r>\lambda_0$, imposed in Proposition \ref{step1}, to hold. In fact, assume $\widetilde K>0$ as in Assumption \ref{HYP_POSITIVITY_PHI_K}, then by Lemmas \ref{lem_k} and \ref{spectral_bounds} we immediately get $r>\lambda_0$.
\begin{remark}
Notice that, if $\Phi$ is a \emph{positive} measure, then $\widetilde K \equiv K$, $\lambda_0 \equiv \widetilde {\lambda_0}$ and the condition $K>0$ becomes also necessary, that is $K>0 \iff r>\lambda_0$.
\end{remark}

\section{Applications}
\label{Applications}\subsection{Optimal portfolio problems with path dependent labor income}

As a first practical application of our results, we consider the case in which the 
contingent claim $y$, given in \eqref{DYN_LABOR_INCOME_DELAY_I}, represents stochastic wages received by an agent
over her lifetime. As mentioned in the Introduction, stochastic delay differential equations allow for a realistic description of the  labor income evolution in continuous time models.
The path-dependency of the dynamics of the labor income \eqref{DYN_LABOR_INCOME_DELAY_I} is in line with the empirical literature showing that wages adjust slowly to financial markets shocks. In fact, as outlined by Keynes in the article \textit{The General Theory of Employment, Interest and Money} (1936), wages and prices do not adjust immediately to shocks in the economy. There is a vast literature on the topic and we refer to \cite{BGP} for a comprehensive list  of relevant papers. 

In this context, expression \eqref{HUMAN_CAPITAL} 
represents the present value of future discounted labor income and it has to be interpreted as the market value of the agent's human capital (see e.g.,  \cite{DL2010}). In Theorem \ref{TEO_EXPLICIT_FORM_HUMAN_CAPITAL} we therefore obtain a closed form solution for human capital, which makes
explicit the contributions of the market value of the past and the present. 
Our valuation formula results in
an explicit expression of human capital demonstrating the importance of appreciating
the past to quantify the current market value of future labor income. 

Whereas Assumption \ref{HYP_POSITIVITY_PHI_K} is all we need to provide the explicit valuation result of Theorem~\ref{TEO_EXPLICIT_FORM_HUMAN_CAPITAL}, the particular application to  
human capital 
requires labor income to be positive almost surely. A sufficient condition for this to be the case is 
provided in the following remark (for the proof see \cite[Proposition 2.7]{BGP}). 
\begin{remark} 
\label{rem_pos}
When $\varphi_i=0$ for all $i=1...n$, that is when the delay term in the volatility coefficient of \eqref{DYN_LABOR_INCOME_DELAY_I} vanishes, the variation of constants formula yields an explicit representation for $y$:
 \begin{eqnarray}	y(t)= \mathcal E(t)\big(   x_0 + \mathcal I(t)\big), 
	\end{eqnarray}
where 
\begin{eqnarray*}
		\mathcal E(t):= e^{(\mu_y -  \frac{1}{2}|\sigma_y|^2)t+ \sigma_y Z(t) },
		\qquad
		\mathcal I(t) := \int_0^t    \mathcal E^{-1}(u) \int_{-d}^0 \,y(s+u) \phi({\rm d}s)\, {\rm d}u.
	\end{eqnarray*} 
In this case one can see that, if $x_0>0$, $x_1 \ge 0$ a.s., $\phi \ge 0$ a.s., then $y(t)>0$ $\mathbb{P}$-a.s.	
\end{remark}
 
The results of Theorem \ref{TEO_EXPLICIT_FORM_HUMAN_CAPITAL} and the approach followed in this paper show how tools from infinite-dimensional analysis can be successfully used to address valuation problems that are non-Markovian, and hence beyond the reach of conventional approaches. 
For instance, Theorem \ref{TEO_EXPLICIT_FORM_HUMAN_CAPITAL} becomes an essential ingredient for solving some optimal control problems as the ones addressed in \cite{BGP}, \cite{DGZZ} and \cite{BiaGozZan}. 
In the following Examples we briefly recall the results of these paper, that show how the findings of Theorem \ref{TEO_EXPLICIT_FORM_HUMAN_CAPITAL} (or suitable generalization of it) can be successfully used to address an interesting class of optimal control problems. 

\begin{example}
\label{es1}
In \cite{BGP} the authors consider, and completely solve, an infinite horizon portfolio problem with borrowing constraints, in which an agent
receives labor income which adjusts to financial market shocks in a path dependent way. We briefly describe the framework of \cite{BGP}, emphasizing how our Theorem \ref{TEO_EXPLICIT_FORM_HUMAN_CAPITAL} is crucial in the derivation of their results.

In the framework of a Black\&Scholes complete market model described by \eqref{DYNAMIC_MARKET}, a representative agent  is endowed with initial wealth $w\geq 0$ and receives wages till her death.  The time of death $\tau_{\delta}$ is modeled as an exponential random variable of parameter $\delta >0$.
The wealth of the agent at time $t\ge 0$ is denoted by $W(t)$  and the wage rate is $y(t)$. She can invest in the riskless and risky assets, and can consume at a rate $ c(t)\geq 0$. The wealth allocated to the risky assets is $ \theta(t)\in \mathbb R^n$  at each time $t\geq 0$.  The agent has a bequest target $B(\tau_\delta)$ at  death, where the bequest  process $ B(\cdot)\geq 0$  is also chosen by the agent. To cover the gap between bequest and wealth at death: $ B(\tau_{\delta})-W(\tau_{\delta}),$ the agent pays an instantaneous life insurance premium of  $\delta( B(t)-W(t))$ for $t<\tau_\delta$ (for more details see \cite{DL2010}).
The agent's wealth (before death) is assumed to obey to the standard dynamic budget constraint of the Merton portfolio model, but with the labor income
	and insurance premium
	terms  added in the drift, exactly as in \cite{DL2010}; thus the agent's wealth $W$ satisfies the SDE
        \begin{equation}
          \label{eq:W_single}
          \begin{cases}
          dW(t) =  \left[W(t) r + \theta(t)\cdot (\mu-r\mathbf{1})  + y(t) - c(t)
-\delta\left(B(t)-W(t)\right)\right] {\rm d}t + \theta(t)\cdot\sigma {\rm d}Z(t),
\\
W(0)=w .
\end{cases}
        \end{equation}
In line with the empirical findings recalled above, to reflect a realistic economic setting, the dynamics of labor incomes adjust slowly to financial market shocks and it is modeled as a path-dependent delayed diffusion process of the form
\begin{equation}\label{delay}
\begin{cases}
{\rm d}y(t) =\left[\mu_y y(t)+\int_{-d}^0 \phi(s) y(t+s) {\rm d}s  \right] {\rm d}t + y(t)\sigma_y {\rm d}Z(t),
\\
y(0)=  x_0, \quad y(s) = x_1(s) \mbox{ for $s \in  [-d,0)$},
\end{cases}
\end{equation}
where $\mu_y \in  \mathbb R$, $\sigma_y \in  \mathbb R^n$, the functions $\phi(\cdot), x_1(\cdot)$ belong to $L^2\left(-d,0; \mathbb R\right)$ (and thus \eqref{delay} is a particular case of \eqref{DYN_LABOR_INCOME_DELAY_I}). 
\\
Denoted by $k>0$ the intensity of preference for leaving a bequest, $\gamma \in (0,1) \cup (1, +\infty)$ the risk-aversion coefficient and $\rho >0$ the discount rate, the aim is to maximize the expected utility from lifetime consumption and bequest,
\begin{eqnarray}\label{DEF_OBJECTIVE FUNCTION_DEATH TIME}
\mathbb E \left(\int_{0}^{+\infty} e^{-(\rho+ \delta) t }
\left( \frac{c(t)^{1-\gamma}}{1-\gamma}
+ \delta \frac{\big(k B(t)\big)^{1-\gamma}}{1-\gamma}\right) {\rm d}t
\right),
\end{eqnarray}
over all triplets $\left(c,\theta,B\right)\in \Big\{\mathbb F-\mbox{predictable} \ c(\cdot), B(\cdot), \theta(\cdot) \colon c(\cdot), B(\cdot) \in L^1 (\Omega \times [0, +\infty);\mathbb R_{+}),\theta(\cdot) \in L^2(\Omega \times \mathbb R; \mathbb R^n)\Big\}
$
satisfying the state constraint 
\begin{equation}\label{NO_BORROWING_WITHOUT_REPAYMENT_CONDITIONLA_MEAN}
W(t) +   \xi^{-1}(t)\mathbb E\left( \int_t^{+\infty} \xi(u) y(u) {\rm d}u \Bigg\vert \mathcal F_t\right)  \geq 0,
\end{equation}
which is a no-borrowing-without-repayment constraint
as the second term in
\eqref{NO_BORROWING_WITHOUT_REPAYMENT_CONDITIONLA_MEAN}
represents the agent's market value of
human capital at time $t$.  Human capital can be pledged as collateral,
and represents the agent's maximum borrowing capacity.
Notice that here $\xi$ satisfies equation \eqref{DYN_STATE_PRICE_DENSITY}, with a drift of the form $-\xi(t)(r+\delta)$, as explained in Remark \ref{remark:bounded_horizon}.
\\
Under such budget constraint, the authors find an explicit solution to expected power utility maximization from consumption and bequest. The proof of the result relies on the resolution of an infinite-dimensional Hamilton-Jacobi-Bellman (HJB) equation, that can be considered an infinite-dimensional version of the classical Merton problem. From a technical point of view, the key idea is to extend the state space so to include the past path of $y$. In this way the problem becomes infinite dimensional and Markovian in the current wealth and in the path of $y$ over the time window $[-d,0]$. In this infinite-dimensional reformulation of the problem it becomes essential to rewrite the constraint \eqref{NO_BORROWING_WITHOUT_REPAYMENT_CONDITIONLA_MEAN} by decomposing it in its present and past components. This kind of decomposition, provided by our Theorem \ref{TEO_EXPLICIT_FORM_HUMAN_CAPITAL}, is the essential ingredient that allows to find an explicit solution to the HJB equation, which in turns allows to completely solve the problem and to find explicitly the optimal
controls in feedback form (see \cite[Theorem 5.1, Section 5]{BGP}). \end{example}
 \begin{example}
 \label{es2}
 In \cite{BiaGozZan} is studied a robust version of the life-cycle optimal portfolio choice problem presented in Example \ref{es1}. Theorem \ref{TEO_EXPLICIT_FORM_HUMAN_CAPITAL} is needed in order to face the infinite-dimensional robust Merton problem the authors derive and to obtain an explicit solution of it. 
 \end{example}
 
 \begin{example}
 \label{es_3}
Another generalization of the problem addressed in \cite{BGP} is considered in \cite{DGZZ}. Similarly to \cite{BGP}, the authors consider a life-cycle optimal portfolio choice problem faced by an agent receiving labor income and allocating her wealth to risky assets and a riskless bond subject to a borrowing constraint. However here the dynamics of the labor income has two main features. First, labor income adjust slowly to financial market shocks, as in \cite{BGP}. Second, the labor income $y_i$ of an agent $i$ is benchmarked against  the labor incomes of a population $y^n:=(y_1,y_2,\ldots,y_n)$ of $n$ agents with comparable tasks and/or ranks. This last feature is faced taking the limit when $n\to +\infty$
so that the problem falls into the family of optimal control of infinite dimensional McKean-Vlasov Dynamics. The problem in studied in a simplified case where, adding a suitable new variable, the authors are able to find explicitly the solution of the associated infinite-dimensional HJB equation and find the optimal feedback controls. A necessary step to solve the problem is to provide a suitable reformulation of the no-borrowing without repayment constraint \eqref{NO_BORROWING_WITHOUT_REPAYMENT_CONDITIONLA_MEAN}, where now the labor income $y$ follows a stochastic delay differential equation where the drift contains not only a path-dependent term but also a mean reverting term. This issue is analyzed in \cite[Section 3]{DGZZ} where the authors provide a suitable generalization of formula \eqref{SOLUTION} carefully readapting the technique we use in Section \ref{proof_section}.   
\end{example}


\subsection{Counterparty risk and derivatives valuation}
\begin{example}
As a simple example of application of our setup to the context of over-the-counter derivatives, 
in equation \eqref{DYN_LABOR_INCOME_DELAY_I} consider the case of $n=1$,
$\mu_0=0$, $\phi=0$, $\sigma_0=0$, and
$ \varphi(s) =\delta_{-d}(s)$, where $\delta_{a}(s)$ indicates the delta-Dirac measure at $a$,
so that equation   \eqref{DYN_LABOR_INCOME_DELAY_I} reads
\begin{equation}\label{EXAMPLE_NEG}
\text{d}X_0(t) = X_0(t-d) \text{d} Z(t).
\end{equation}
Then, for $t \in [0,d )$
we have
\begin{eqnarray}
 X_0(t) = x_0+  \int_0^t X_0(s-d)         \text{d} Z(s)=
 x_0+  \int_{-d}^{t-d} x_1(\tau)          \text{d}Z(\tau+d) .
 \end{eqnarray}
In this case $X_0(t)$ is Gaussian, and dynamics \eqref{EXAMPLE_NEG} could be used to model, for example,  the variation margin of an over-the-counter swap,
when the collateralization procedure relies on a delayed mark-to-market
value of the instrument
(see \cite{brigo2013counterparty},  page~316, or \cite{BrigoPallavicini2014}, for example).
\end{example}

\section{Conclusion}
\label{sec:conclusions}

In recent years mathematical finance literature has seen the development of dynamics models that take into consideration the influence of past events on the current and future state of the system. Systems with delay find their well-deserved place in finance since delay in the dynamics can represent memory or inertia in the financial system.
On the other hand, this kind of models are not new and they appear in many applications, think for instance to population dynamics models in biology where delays occur naturally for biological reasons, or to epidemic models where time delays are introduced to model constant sojourn times in a state, for example, the infective or immune state. 
The applications studied in the literature generally make rather clear that, when introduced in an explicit way, time delays may change the qualitative behavior of a model.

From a mathematical perspective, delay systems has, in general, an infinite-dimensional nature. Problems connected to equations with delays are thus a challenging research area since the natural general approach to solve them is infinite-dimensional.

The findings of this paper perfectly fit into this research area. On the one hand, our result highlights, once again, how the knowledge of the past affects the knowledge of the future. On the other hand, it points out how tools from infinite-dimensional analysis can be effectively used to address problems involving delays. 

\subsection{Concluding remarks}

In this paper we have considered a standard Black\&Scholes complete market model where securities evolve as geometric Brownian motions. Despite the classical setup we started from, we have been able to derive an explicit pricing formula for stream of payments with a \textit{delayed} dynamics, by means of techniques from infinite-dimensional analysis. 
Our valuation formula results in an explicit expression demonstrating the importance of appreciating the past to quantify the current
market value of the future. The approach followed in this paper
highlights how tools from infinite-dimensional analysis can be successfully used to address valuation problems that are non-Markovian, and hence beyond the reach of conventional approaches. Moreover, as highlighted in Section \ref{Applications}, our results and the techniques we developed here, have already been successfully used to explicitly solve an interesting class of infinite dimensional stochastic optimal control problems.

\subsection{Future work}
From a financial perspective it may be interesting to address the problem of what happens if a path dependent object like \eqref{DYN_LABOR_INCOME_DELAY_I} is made available for trade. A problem of this kind has actually already been studied in \cite{arriojas2007delayed}. There, the authors assume that the stock prices satisfy stochastic functional differential equations and derive an explicit formula for the valuation of an European call option on a given stock. 
The problem addressed in \cite{arriojas2007delayed} could be consider a sort of counterpart of our setup. In fact, there the setup is non standard since the authors consider securities with path dependent dynamics. However, once the market has proved to be complete with a path-dependent market price of risk, the pricing of an European call becomes reasonably standard.

The valuable contribution of \cite{arriojas2007delayed} lies in showing that, if the securities of the market have path-dependent dynamics, then the market price of risk will have a path-dependent structure. As a consequence, pricing formulas can be obtained working under the  objective measure $\mathbb{P}$ and discounting by a path-dependent state price process $\xi$, suitable related to the market-price of risk.

Bearing in mind this finding, we could argue as in \cite{arriojas2007delayed} to see that, if an object of the form \eqref{DYN_LABOR_INCOME_DELAY_I} is made available for trade, then the market price of risk would change becoming path-dependent. The formal proof of this fact is outside the scope of this paper and it would be nothing innovative being just an extension of the results contained in \cite{arriojas2007delayed}. Anyway, the setup considered in \cite{arriojas2007delayed} raises up some interesting questions since one may now ask how the valuation formula we derived in Theorem \ref{TEO_EXPLICIT_FORM_HUMAN_CAPITAL} becomes if the stock prices have themselves a delayed dynamics. 
That is, suppose to consider, on the filtered probability space $(\Omega, \mathcal F, \mathbb F, \mathbb P)$, the $\mathbb F$-adapted vector valued process $(S_0,S)$, representing the price evolution of a money market account, $S_0$ with rate of return $r\ge 0$, and a single (for simplicity, as in \cite{arriojas2007delayed}) asset whose price $S(s)$ at time $s$ satisfies the following stochastic differential equation with delay:
\begin{equation}
\label{S_path_dep}
\begin{cases}
	{\rm }dS(s) = \left[ \int_{-d}^0  S(s+\tau) \,(\phi^S+\mu\delta_0)(\text{d}\tau) \right]\text{d}s
    +\left[ \int_{-d}^0S(s+\tau) (\phi_1^S+\sigma\delta_0)(\text{d}\tau)
  \right]  \text{d} Z(s),
  \\
S(0)= s_0, 
\\
S(s)= s_1(s )  \quad \mbox{  for $s \in [-d,0)$},
\end{cases}
   \end{equation}
where $\mu$ and $\sigma$ are positive constants, $\phi^S$ and $\phi_1^S$ are measure of bounded variation on $[-d,0]$, $\delta_0$ is the delta Dirac function and $Z$ is a one-dimensional Brownian motion (assume that $\mathbb F:=\{\mathcal F_t\}_{t\geq0}$ is the filtration
generated by $Z$, and enlarged with the
$\mathbb P$-null sets).
Arguing as in \cite{arriojas2007delayed} one should expect the market price of risk to be of the following form: \begin{footnote}
{We emphasize that the formal proof passes through an application of the Girsanov Theorem. In order for the hypothesis of the Girsanov Theorem to be satisfied, it is reasonable to impose suitable restrictions on $\mu$, $\sigma$, $\phi^S$ and $\phi_1^S$. For instance, a sufficient condition for \eqref{mpr} to be well defined is to assume $\phi^S_1\equiv 0$, $s_0>0$, $s_1 \ge 0$ a.s., $\phi^S \ge 0$ a.s. which ensures (see Remark \ref{rem_pos}) that $S(t)>0$ $\mathbb{P}$-a.s.. Here we just proceed heuristically.
}
\end{footnote}
\begin{equation}
\label{mpr}
\kappa(t):= -\frac{\int_{-d}^0 S(t+s) [\phi^S+(\mu-r)\delta_0]({\rm d}s)}{\int_{-d}^0S(t+s)\, (\phi_1^S+\sigma\delta_0)({\rm d} s)}.
\end{equation}

Thus the question one may ask is how the valuation formula we derived in Theorem \ref{TEO_EXPLICIT_FORM_HUMAN_CAPITAL} changes if the market price of risk has a stochastic and path-depentent form like in \eqref{mpr} (or like the one derived in \cite{arriojas2007delayed} if one wants to work in their setup). 
\begin{footnote}
{
Let us point out that a different issue is to understand if $y$ itself is tradable (for instance in the case it is not regarded as a stream of cashflows).  In this case it is not difficult to verify that, an object $y$ of the form \eqref{DYN_LABOR_INCOME_DELAY_I} would be tradable under a delayed Sharpe ratio condition of the following form:
\begin{equation}
\label{src}
\frac{\int_{-d}^0y(t+s)(\phi+(\mu_y-r)\delta_0)({\rm d} s)}{\int_{-d}^0y(t+s)(\varphi_1+\sigma_y\delta_0)({\rm d}s)}
=
\frac{\int_{-d}^0S(t+s)(\phi^S-(\mu+r)\delta_0)({\rm d} s)}{\int_{-d}^0S(t+s)(\phi_1^S+ \sigma\delta_0)({\rm d}s)}.
\end{equation}
Notice that, when the delay part is set to zero we recover the classical Sharpe ratio condition $\frac{\mu-r}{\sigma}=\frac{\mu_0-r}{\sigma_0}$. Moreover, mutatis mutandis, \eqref{src} has the same structure of the Sharpe ratio condition derived in \cite{arriojas2007delayed}.
}
\end{footnote}
This problem somehow would combine the novelties presented in \cite{arriojas2007delayed} and the ones presented in this paper since both the stock prices and the stream of payments would have a stochastic delayed dynamics. So the problem would be to derive an explicit pricing formula for \eqref{HUMAN_CAPITAL}, with $y$ as in \eqref{DYN_LABOR_INCOME_DELAY_I}, when the securities have themselves a path dependent dynamics of the form \eqref{S_path_dep} or the form considered in  \cite{arriojas2007delayed}. This is not a trivial extension of our result Theorem \ref{TEO_EXPLICIT_FORM_HUMAN_CAPITAL} because of the stochasticity and path dependency of the market price of risk. We leave it for future work.

\newpage

\appendix
\section{Proof of some results}

\subsection{Proof of Proposition \ref{well_pose}}
\label{app0}
The existence and uniqueness result for \eqref{DYN_LABOR_INCOME_DELAY_I} is not covered by the extant literature. When the initial datum $x = (x_0,x_1)$, seen as a function on $[-d,0]$, is  continuous, existence and uniqueness of the strong solution to the SDDE for $y$ is proved by \cite[Theorem I.2]{MOHAMMED_BOOK_96}. 
 When the initial datum  $x\in \R\times L^2([-d,0),{\rm d}t ;\R)$,  with the additional requirement that  $\phi$ and $(\varphi_1,...\varphi_n)$ are absolutely continuous w.r.t the Lebesgue measure, that is ${\rm d}\phi = \varphi \rm{d}t$, ${\rm d}(\varphi_1,...\varphi_n)= (\phi_1,...\phi_n)\,{\rm d}t$ and the Radon-Nikodym densities  $\varphi, \phi_1,...\phi_n \in L^2([-d,0), {\rm d}t)$, the existence and uniqueness result follows by \cite[Remark I.3(iv)]{MOHAMMED_BOOK_96}.  We need to extend this latter result to the case in which $\phi$ and  $(\varphi_1,...\varphi_n)$  are signed measures of bounded variation on $\left[-d,0\right]$. We will prove the result by means of the same procedure employed for the proof of \cite[Proposition B.2]{BiaGozZan}. There the authors prove the existence and uniqueness of the solution for an equation similar to \eqref{DYN_LABOR_INCOME_DELAY_I}, under more general assumptions on the measure $\phi$, but with no delay in the diffusion term.

Let us start by introducing the standard notation for the past path at $t$ of a (deterministic) function $h:[-d,T] \rightarrow \mathbb{R}$, for $0\leq t\leq T$, that is the function $h_t$
 \begin{equation*}
  h_t(s) := h(t+s) \, \,   \, \text{ for } -d \leq s \leq 0.
 \end{equation*}
    The past path of $y$ at $t$ for the realization $\omega$ is thus $y_t(s,\omega): =y(t+s,\omega) \, \, s \in [-d,0]$. As usual, we hide the dependence of the process on $\omega$ and write the delay terms in the drift and in the diffusion as follows:     \begin{equation}
    \label{D1}
    \int_{-d}^0y(t+s)\phi({\rm d}s) =\int_{-d}^0y_t(s)\phi({\rm d}s)
    \end{equation}
     and
  \begin{equation}
    \label{D2}
  \begin{pmatrix}
 \int_{-d}^0 y(t+s)  \varphi_1(\text{d}s) \\
 \vdots
 \\
 \int_{-d}^0y(t+s) \varphi_n(\text{d}s)
\end{pmatrix}    
=
\begin{pmatrix}
 \int_{-d}^0 y_t(s)  \varphi_1(\text{d}s) \\
 \vdots
 \\
 \int_{-d}^0y_t(s) \varphi_n(\text{d}s).
\end{pmatrix}    
    \end{equation}
       
The delay parts in \eqref{DYN_LABOR_INCOME_DELAY_I}, given by \eqref{D1} and \eqref{D2} can be then expressed in terms of (an extension of) the following linear operators ok kernel type:
\begin{equation}
\label{L}
L:C([-d,0];\mathbb{R}) \rightarrow \mathbb{R}, \qquad Lf:=\int_{-d}^0f(s)\, \phi({\rm d}s),
\end{equation}
and
\begin{equation}
\label{G2}
G:C([-d,0];\mathbb{R}) \rightarrow \mathbb{R}^n, \qquad Gf:= \begin{pmatrix}
 \int_{-d}^0 f(s)  \varphi_1(\text{d}s) \\
 \vdots
 \\
 \int_{-d}^0 f(s) \varphi_n(\text{d}s)
\end{pmatrix}.
\end{equation}
Since the operators $L$ and $G$ are well-defined only on the space of continuous functions $C([-d,0]; \mathbb{R})$, when the initial datum does not belongs to $C([-d,0]; \mathbb{R})$ but just to $L^2([-d,0); \mathbb{R})$ problems may arise.  
In fact, consider the initial datum $(x_0, x_1) \in \mathbb{R} \times L^2(-d,0;\mathbb{R})$ and proceed by assuming that the solution to \eqref{DYN_LABOR_INCOME_DELAY_I} exists.  Denote the past path on the window $[t-d,t]$ by $y_t:[-d,0] \rightarrow \mathbb{R}$, $y_t(s):=y(t+s)$ a.e. $t \ge 0$, $s \in [-d,0]$. Then, for $0 \le t <d$, the past path is
\begin{equation*}
y_t(s)=
\begin{cases}
y(t+s) &\text{if} \ -t \le s<0
\\
x_1(s) &\text{if} -d \le s<-t.
\end{cases}
\end{equation*} 
which, in general, is not a continuous function, but only square integrable. Therefore, the operators $L$ and $G$ introduced in \eqref{L} and \eqref{G2} cannot be applied to $y_t$ since the integrals in \eqref{D1} and \eqref{D2} may not be well defined.
\newline
The first issue is thus to show that $L$ and $G$ admit continuous extensions to the square integrable functions on $[-d,0]$, as made precise in the following lemma.
\begin{lemma}
\label{pre_lem}
Let $L:C([-d,0];\mathbb{R}) \rightarrow \mathbb{R}$ and $G: C([-d,0];\mathbb{R}) \rightarrow \mathbb{R}^n$ be the continuous and linear maps given in \eqref{L} and \eqref{G2} respectively. Fix $T>0$ and define on $C([-d,T];\mathbb{R})$ the delay operators
\begin{equation*}
\mathcal{L}(y)(t):=Ly_t, \qquad 0 \le t\le T,
\end{equation*}
\begin{equation*}
\mathcal{G}(y)(t):=Gy_t, \qquad 0 \le t\le T.
\end{equation*}
Then,
\begin{enumerate}
\item the maps $\mathcal{L}:C([-d,T];\mathbb{R}) \rightarrow L^2([0,T];\mathbb{R})$ and $\mathcal{G}:C([-d,T];\mathbb{R}) \rightarrow L^2([0,T];\mathbb{R}^n)$ satisfy, respectively, the inequalities
\begin{equation}
\label{starL}
\|\mathcal{L}(y)\|_{L^2([0,T];\mathbb{R})}\le|\phi|([-d,0]) \|y\|_{L^2([-d,T];\mathbb{R})}, \qquad \forall y \in C([-d,T];\mathbb{R}).
\end{equation}
\begin{equation}
\label{starG}\|\mathcal{G}(y)\|_{L^2([0,T];\mathbb{R}^n)}\le\left( \sum_{i=1}^n [|\varphi_i|([-d,0])]^2\right)^{\frac 12} \|y\|_{L^2([-d,T];\mathbb{R})}, \qquad \forall y \in C([-d,T];\mathbb{R}).
\end{equation}
\item $\mathcal{L}$ and $\mathcal{G}$ have $L^2$-norm continuous, linear extensions (still denoted by $\mathcal{G}$ and $\mathcal{L}$, respectively) to $L^2([-d,T];\mathbb{R})$. 
\end{enumerate}
\end{lemma}
\begin{proof}
The proof follows the lines of \cite[Lemma B.1]{BiaGozZan} (see also \cite[Part II, Chapter 4, Theorem 3.3]{BENSOUSSAN_DAPRATO_DELFOUR_MITTER}, but for the sake of completeness, we prove the result for the operator $\mathcal{G}$. For the operator $\mathcal{L}$ one follows the same reasoning.
\begin{enumerate}
\item
\begin{align*}
\|\mathcal{G}(y)\|_{L^2([0,T];\mathbb{R}^n)}
&=\|Gy_{\cdot}\|_{L^2([0,T];\mathbb{R}^n)}
= \sup_{h\in L^2([0,T];\mathbb{R}^n), \|h\|_{L^2}\le 1}\int_0^TGy_r \cdot h(r)\, {\rm d}r 
\\
&=\sup_{h\in L^2([0,T];\mathbb{R}^n), \|h\|_{L^2}\le 1}\int_0^T
\begin{pmatrix}
 \int_{-d}^0 y_r(s)  \varphi_1(\text{d}s) \\
 \vdots
 \\
 \int_{-d}^0y_r(s) \varphi_n(\text{d}s)
\end{pmatrix}    \cdot h(r)\, {\rm d}r 
\\
&= \sup_{h\in L^2([0,T];\mathbb{R}^n), \|h\|_{L^2}\le 1}\sum_{i=1}^{n} \int_0^T h_i(r)\int_{-d}^0 y_r(s)\varphi_i({\rm d}s)\, {\rm d}r.
\end{align*}
We estimate the i-th component ($i=1...n$) of the above expression exploiting the Fubini Theorem and the H\"older inequality.
\begin{align*}
\int_0^Th_i(r) \int_{-d}^0y_r(s)\, \varphi_i({\rm d}s)\, {\rm d}r
& \le \int_0^T|h_i(r)| \int_{-d}^0|y(r+s)|\, |\varphi_i|({\rm d}s)\, {\rm d}
r
\\
&= \int_{-d}^0\int_0^T|h_i(r)||y(r+s)|\,  {\rm d}r\, |\varphi_i|({\rm d}s)
\\
&\le\int_{-d}^0\|h_i\|_{L^2([0,T];\mathbb{R})}\|y\|_{L^2([s,s+T];\mathbb{R})}\, |\varphi_i|({\rm d}s)
\\
&\le |\varphi_i|([-d,0])\|h_i\|_{L^2([0,T];\mathbb{R})}\|y\|_{L^2([-d,T];\mathbb{R})},
\end{align*}
where for the last inequality we exploit the inclusion $[s, s+T] \subseteq [-d,T]$.
Therefore, by means of the H\"older inequality we obtain
\begin{align*}
\|\mathcal{G}(y)\|_{L^2([0,T];\mathbb{R}^n)}
&\le \sup_{h\in L^2([0,T];\mathbb{R}^n), \|h\|_{L^2}\le 1} \sum_{i=1}^{n}|\varphi_i|([-d,0])\|h_i\|_{L^2([0,T];\mathbb{R})}\|y\|_{L^2([-d,T];\mathbb{R})}
\\
&\le \|y\|_{L^2([-d,T];\mathbb{R})}\sup_{h\in L^2([0,T];\mathbb{R}^n), \|h\|_{L^2}\le 1} \left( \sum_{i=1}^n [|\varphi_i|([-d,0])]^2\right)^{\frac 12}\left( \sum_{i=1}^n \|h_i\|^2_{L^2([0,T];\mathbb{R})}\right)^{\frac 12}
\\
&=\|y\|_{L^2([-d,T];\mathbb{R})}\sup_{h\in L^2([0,T];\mathbb{R}^n), \|h\|_{L^2}\le 1} \left( \sum_{i=1}^n [|\varphi_i|([-d,0])]^2\right)^{\frac 12}\|h\|^2_{L^2([0,T];\mathbb{R}^n)}
\\
&\le \left( \sum_{i=1}^n [|\varphi_i|([-d,0])]^2\right)^{\frac 12}\|y\|_{L^2([-d,T];\mathbb{R})}.
\end{align*}
\item The existence of the bounded linear extension of $\mathcal{L}$ and $\mathcal{G}$ to $L^2([-d,t];\mathbb{R})$ is a consequence of inequalities \eqref{starL} and \eqref{starG} and the fact that $C([-d,T];\mathbb{R})$ is dense in $L^2([-d,T];\mathbb{R})$.
\end{enumerate}
\end{proof}
We are now ready to prove Proposition \ref{well_pose}.
\begin{proof}[Proof of Proposition \ref{well_pose}]
The proof of the result relies on a contraction type argument. The same argument has been used in the proof of \cite[Proposition B.2]{BiaGozZan}. There the authors consider a SDDE of type \eqref{DYN_LABOR_INCOME_DELAY_I} with no delay in the diffusion term. On the other hand they work in a more general setting considering a measure valued process $\phi$ in the delay integral of the drift term. 
\newline
We provide here a sketch of the proof referring to \cite{BiaGozZan} for more details. We will give just the details of the estimates concerning the delay integral in the diffusion term that is missing in \cite{BiaGozZan}. 

Let us fix the initial condition $(x_0, x_1) \in \mathbb{R}^2 \times L^2([-d,0];\mathbb{R})$. Let $T>0$, we introduce the space
\begin{equation*}
S_T:=\{y \in C([0,T];\mathbb{R}):y(0)=x_0\},
\end{equation*}
endowed with the sup norm 
\begin{equation*}
\|y\|_{S_T} = \sup_{t \in [0,T]}|y(t)|.
\end{equation*}
We consider the space $L^p(\Omega;S_T)$, $p \ge 2$, endowed with the norm 
\begin{equation*}
\|y\|_{L^p(\Omega;S_T)} = \left(\mathbb{E}\left[\|y\|^p_{S_T}\right]\right)^{\frac 1p}= \left( \mathbb{E}\left[\sup_{t \in [0,T]}|y(t)|^p\right]\right)^{\frac 1p}.
\end{equation*}
In the sequel we will denote by $p':=\frac{p}{p-1}$ the conjugate exponent to $p$ and by $p^*:=\frac{p}{p-2}$ the conjugate exponent to $\frac p2$.
\newline
Given $y \in L^p(\Omega;S_T)$, let
\begin{align}
\label{F}
F(y)(t):= x_0+ \mu_y \int_0^t y(r)\,{\rm d}r + \int_0^t \mathcal{L}(\bar y^{x_1})\, {\rm d}r+ \int_0^t y(r)\sigma_y \cdot {\rm d}Z(r)+ \int_0^t\mathcal{G}(\bar y^{x_1})\cdot {\rm d}Z(r), \qquad 0 \le t \le T.
\end{align}
Here $\mathcal{L}$ and $\mathcal{G}$ are the continuous linear operators introduced in Lemma \ref{pre_lem} and $\bar y^{x_1} \in L^p(\Omega;L^2([-d,T];\mathbb{R}) )$ is defined as follows:
\begin{equation}
\label{glu}
 \bar{y}^{x_1}(t) = 
 \begin{cases}
 x_1(t), &  \text{if }    -d\leq t< 0; \\
y(t),  &  \text{if }  \  \  0\leq t\leq T.
\end{cases}
\end{equation}
We aim at proving that $F$ maps $L^p(\Omega;S_T)$ into itself for any $p\ge 2$ and that it is a contraction on the same space when $p>4$.

Let us start by proving that $F$ maps $L^p(\Omega, S_T)$, $p\ge 2$, into itself.
We write 
\begin{align}
\label{con1}
\|F(y)\|_{L^p(\Omega;S_T)}
& \le |x_0|+  |\mu_y|\left\Vert\int_0^{\cdot}y(r)\, {\rm d}r\right\Vert_{L^p(\Omega;S_T)}
\notag\\
& + \left\Vert\int_0^{\cdot} \mathcal{L}(\bar y^{x_1})\, {\rm d}r \right\Vert_{L^p(\Omega;S_T)}
 +\left\Vert\int_0^{\cdot} y(r)\sigma_y \cdot \, {\rm d}Z(r)\right\Vert_{L^p(\Omega;S_T)}
 + \left\Vert \int_0^{\cdot} \mathcal{G}(\bar y^{x_1}) \cdot \, {\rm d}Z(r)\right\Vert_{L^p(\Omega;S_T)}.
\end{align}
The boundedness of the terms that appears in the r.h.s. of \eqref{con1}, except the last one, can be proved following the lines of \cite[Proposition B.2]{BiaGozZan}. We estimate the last term in the r.h.s. of \eqref{con1} by means of the Burkholder-Davies-Gundy inequality
\begin{align*}
\left\Vert \int_0^{\cdot} \mathcal{G}(\bar y^{x_1}) \cdot \, {\rm d}Z(r)\right\Vert_{L^p(\Omega;S_T)}^p
&= \mathbb{E} \left[\sup_{t \in [0,T]}\left|\int_0^t \mathcal{G}(\bar y^{x_1})\cdot {\rm d}Z(r)\right|^p \right]
\lesssim  \mathbb{E} \left[\left|\int_0^T\| \mathcal{G}(\bar y^{x_1})\|^2\, {\rm d}r\right|^{\frac p2} \right]
\\
&= \mathbb{E} \left[\| \mathcal{G}(\bar y^{x_1})\|^p_{L^2([0,T];\mathbb{R}^n)}\right]
\lesssim \mathbb{E} \left[\|\bar y^{x_1}\|^p_{L^2([-d,T];\mathbb{R})}\right]
\\
&=\|\bar y^{x_1}\|^p_{L^p(\Omega;L^2([-d,T];\mathbb{R}))} < \infty,
\end{align*}
where in the last inequality we exploited \eqref{starG} of Lemma \ref{pre_lem}.

Let us now prove that, for $p>4$, $F$ defines a contraction in $L^p(\Omega, S_T)$. We endow this space by the equivalent norm

\begin{equation}
\label{a_norm}
\|y\|_{\alpha} := \left(\mathbb{E} \left[\sup_{t \in [0,T]}\left(e^{-\alpha t}|y(t)|\right)^p\right]\right)^{\frac1p},
\end{equation}
where $\alpha>0$ will be chosen later on. Once we proved that $F$ defines a contraction, by the Banach fixed point Theorem, we can infer the existence of a unique $y \in L^p(\Omega;S_T)$ such that $y=F(y)$, i.e. 
\begin{align*}
y(t)= x_0+ \mu_y \int_0^t y(r)\,{\rm d}r + \int_0^t \mathcal{L}(\bar y^{x_1})\, {\rm d}r+ \int_0^t y(r)\sigma_y \cdot {\rm d}Z(r)+ \int_0^t\mathcal{G}(\bar y^{x_1})\cdot {\rm d}Z(r), \qquad 0 \le t \le T, \qquad \mathbb{P}-a.s.,
\end{align*}
and this will conclude the proof.

Given $y, z \in L^p(\Omega;S_T)$, from \eqref{F} and \eqref{a_norm}, we have
\begin{align}
\label{dif}
 \|F(z)-F(y)\|_{\alpha}^p
 \lesssim_p
&  
\mathbb{E}
  \left[\sup_{t \in [0,T]}
  e^{-p\alpha t}\left( |\mu_y|\left|\int_0^{t}(z(r)-y(r)){\rm d}r\right|^p
  +
 \left|\int_0^{t} \mathcal{L}(\bar z^{x_1}-\bar y^{x_1})\, {\rm d}r \right|^p \right)\right]
\notag
\\
&+\mathbb{E}\left[\sup_{t \in [0,T]} e^{-p\alpha t}\left|\int_0^{t} (z(r)-y(r))\sigma_y \cdot\, {\rm d}Z(r)\right|^p\right]
\notag
\\
&+\mathbb{E}\left[\sup_{t \in [0,T]} e^{-p\alpha t}\left|\int_0^{t} \mathcal{G}(\bar z^{x_1}-\bar y^{x_1})\cdot\, {\rm d}Z(r)\right|^p\right].
\end{align}
We can estimate the first three terms in the r.h.s. of \eqref{dif} proceeding as in \cite[Proposition B.2]{BiaGozZan} \begin{footnote}{For more details on the estimates, the interested reader can consult that paper.}\end{footnote}.
For the first term we obtain
\begin{align}
\label{T_1}
 \mathbb{E}\left[\sup_{t \in [0,T]}
  e^{-p\alpha t} |\mu_y|\left|\int_0^{t}(z(r)-y(r)){\rm d}r\right|^p\right]
  \le |\mu_y|T\left(\frac{1}{\alpha p'}\right)^{\frac{p}{p'}}\|z-y\|^p_{\alpha}
  \lesssim_{\mu_y,T, p} C_1(\alpha)\|z-y\|^p_{\alpha}.
\end{align}
For the second term we get
\begin{equation}
\label{T_2}
\mathbb{E}
  \left[\sup_{t \in [0,T]}
  e^{-p\alpha t}
 \left|\int_0^{t} \mathcal{L}(\bar z^{x_1}-\bar y^{x_1})\, {\rm d}r \right|^p \right]
\le \left(\frac{|\phi|([-d,0])}{\alpha p'} \right)^{\frac{p}{p'}}T |\phi|([-d,0])\|z-y\|^p_{\alpha}
\lesssim_{|\phi|, p, T}C_2(\alpha)\|z-y\|^p_{\alpha}.
 \end{equation}
 For the third term, by means of the so called factorization method (see e.g. \cite[Section 5.3]{DAPRATO_ZABCZYK_BOOK_1}), for a given $\delta \in \left(\frac 1p, \frac 12\right)$\begin{footnote}{Notice that this condition require to work with $p>2$.}\end{footnote}, we have
\begin{align}
\label{T_3}
\mathbb{E}&\left[\sup_{t \in [0,T]} e^{-p\alpha t}\left|\int_0^{t} (z(r)-y(r))\sigma_y \cdot\, {\rm d}Z(r)\right|^p\right]
\notag\\
&\lesssim_{p, \delta} 
\left(\int_0^T u^{p'(\delta-1)}e^{-p'\alpha u}\, {\rm d}u \right)^{\frac{p}{p'}}T \|\sigma_y\|^p\left( \sup_{u \in [0,T] }\int_0^u(u-r)^{-2\delta}e^{-2\alpha(u-r)}\, {\rm d}r\right)^{\frac p2}\|z-y\|_{\alpha}^p
\notag \\
&\lesssim_{p,\delta, T, \|\sigma_y\|} C_3(\alpha)\|z-y\|^p_{\alpha}.
\end{align}

Let us now come to the estimate of the fourth term in \eqref{dif}. Exploiting the factorization method, for $\eta \in \left( \frac 1p, \frac{p-2}{2p}\right)$ \begin{footnote}{This condition is made in order to guarantee the convergence of the integrals that will appear in what follows. Notice that this condition require to work with $p>4$.}\end{footnote} we rewrite that stochastic integral as follows
\begin{equation*}
\int_0^t \mathcal{G}(\bar z^{x_1}-\bar y^{x_1}) \cdot \,{ \rm d}Z(r)=c_{\eta}\int_0^t (t-u)^{\eta-1} Y(u)\, {\rm d}u, 
\end{equation*}
with
\begin{equation*}
\frac{1}{c_{\eta}}:= \int_r^t (t-u)^{\eta-1}(u-r)^{-\eta}\, {\rm d}u = \frac{\pi}{\sin(\pi\eta)},
\end{equation*}
and
\begin{equation*}
Y(u)=\int_0^u (u-r)^{-\eta}\mathcal{G}(\bar z^{x_1}-\bar y^{x_1}) \cdot {\rm d}Z(r).
\end{equation*}

Thanks to the H\"older inequality we estimate 
\begin{align*}
e^{-\alpha t}\left|\int_0^t \mathcal{G}(\bar z^{x_1}-\bar y^{x_1})\cdot {\rm d}Z(r)\right|
&= c_{\eta} e^{-\alpha t}\left| \int_0^t (t-u)^{\eta-1}Y(u)\, {\rm d}u \right|
\\
&= c_{\eta} \left| \int_0^t e^{-\alpha (t-u)}(t-u)^{\eta-1}e^{-\alpha u}Y(u)\, {\rm d}u \right|
\\
&\le c_{\eta}\left(\int_0^te^{-\alpha p'(t-u)}(t-u)^{p'(\eta-1)}\, {\rm d}u\right)^{\frac{1}{p'}} \left(\int_0^te^{-\alpha p u}|Y(u)|^p\, {\rm d}u\right)^{\frac 1p}.
\end{align*}
Therefore we obtain
\begin{align*}
\mathbb{E}&\left[\sup_{t \in [0,T]}e^{-\alpha p t}\left| \int_0^t \mathcal{G}(\bar z^{x_1}-\bar y^{x_1}) \cdot {\rm d}Z(r) \right|^p \right]
\\
&\le c_{\eta}^p \mathbb{E}\left[\sup_{t \in [0,T]}
\left(\int_0^te^{-\alpha p'(t-u)}(t-u)^{p'(\eta-1)}\, {\rm d}u\right)^{\frac{p}{p'}} \left(\int_0^te^{-\alpha p u}|Y(u)|^p\, {\rm d}u\right)
 \right]
\\
&\le c^p_{\eta} \left(\int_0^Te^{-\alpha p'u}u^{p'(\eta-1)}\, {\rm d}u\right)^{\frac{p}{p'}} \int_0^T e^{-\alpha pu} \mathbb{E} \left[ |Y(u)|^p\right]\, {\rm d}u.
 \end{align*}
Now, recalling the definition of $\mathcal{G}$ and that, when $r<d$, $\bar z^{x_1}_r(s)-\bar y^{x_1}_r(s)=0$ for $s \in [-d,-r)$ (see \eqref{glu}), by means of the Burkholder-Davis-Gundy (BDG) and the H\"older (H) inequalities, we obtain for all $u \in [0,T]$,
\begin{align*}
e^{-\alpha pu}&\mathbb{E}\left[ |Y(u)|^p\right] 
= e^{-\alpha pu} \mathbb{E}\left[ \left| \int_0^u (u-r)^{-\eta}\mathcal{G}(\bar z^{x_1}-\bar y^{x_1}) \cdot {\rm d}Z(r)\right|^p \right]
\\
&\overset{BDG}{\lesssim_p} e^{-\alpha up} \mathbb{E}\left[ \left| \int_0^u (u-r)^{-2\eta}\|\mathcal{G}(\bar z^{x_1}-\bar y^{x_1})\|^2\, {\rm d}r\right|^{\frac p2}\right]
\\
&=e^{-\alpha up}\mathbb{E}\left[ \left| \int_0^u (u-r)^{-2\eta}
\sum_{i=1}^n \left|\int_{-d}^0(\bar z_r^{x_1}-\bar y_r^{x_1})(s)\, \varphi_i({\rm d}s) \right|^2
\, {\rm d}r\right|^{\frac p2}\right]
\\
&=e^{-\alpha up}\mathbb{E}\left[ \left| \int_0^u (u-r)^{-2\eta}
\sum_{i=1}^n \left|\int_{-d\vee -r}^0\left((z(r+s)-y(r+s)\right)\, \varphi_i({\rm d}s) \right|^2
\, {\rm d}r\right|^{\frac p2}\right]
\\
&=\mathbb{E}\left[ \left| \int_0^u (u-r)^{-2\eta}e^{-2\alpha (u-r-s)}e^{-2\alpha (r+s)}\sum_{i=1}^n \left|\int_{-d\vee -r}^0\left(z(r+s)-y(r+s)\right)\, \varphi_i({\rm d}s) \right|^2
\, {\rm d}r\right|^{\frac p2}\right]
\\
&\overset{H}\le \mathbb{E}\left[ \left| \int_0^u (u-r)^{-2\eta}e^{-2\alpha (u-r-s)}e^{-2\alpha (r+s)}\sum_{i=1}^n |\varphi_i|([-d,0]) \int_{-d\vee -r}^0\left|(z(r+s)-y(r+s)\right|^2\, \varphi_i({\rm d}s)
\, {\rm d}r\right|^{\frac p2}\right]
\\
&\overset{H}{\le}\left( \sum_{i=1}^n \left(|\varphi_i|([-d,0]) \right)^{p^*}\int_0^u\int_{-d \vee -r}^0(u-r)^{-2p^*\eta}e^{-2\alpha p^*(u-r-s)}\, \varphi_i({\rm d}s)\, {\rm d}r\right)^{\frac{p}{2p^*}} 
\\
&\qquad \quad \ \mathbb{E} \left[\sum_{i=1}^n\int_0^u \int_{-d \vee -r}^0 e^{-\alpha p(r+s)}|z(r+s)-y(r+s)|^p\, \varphi_i({\rm d}s)\, {\rm d}r \right]
\\
&\le \left( \sum_{i=1}^n \left(|\varphi_i|([-d,0]) \right)^{p^*}\int_0^u\int_{-d \vee -r}^0(u-r)^{-2p^*\eta}e^{-2\alpha p^*(u-r-s)}\, \varphi_i({\rm d}s)\, {\rm d}r\right)^{\frac{p}{2p^*}} 
\\
&\qquad \quad \ \sum_{i=1}^n\int_0^u \int_{-d \vee -r}^0\mathbb{E} \left[ \sup_{(r+s) \in [0,u]}\left(e^{-\alpha p(r+s)}|z(r+s)-y(r+s)|^p\right)\right]\, \varphi_i({\rm d}s)\, {\rm d}r
\\
&\le \left( \sum_{i=1}^n \left(|\varphi_i|([-d,0]) \right)^{p^*}\int_0^u\int_{-d \vee -r}^0(u-r)^{-2p^*\eta}e^{-2\alpha p^*(u-r-s)}\, \varphi_i({\rm d}s)\, {\rm d}r\right)^{\frac{p}{2p^*}} 
\\
& \qquad \quad \ u  \sum_{i=1}^n |\varphi_i|([-d,0]) \mathbb{E} \left[ \sup_{(r+s) \in [0,u]}\left(e^{-\alpha p(r+s)}|z(r+s)-y(r+s)|^p\right)\right].
\end{align*}
Therefore, 
\begin{align*}
\int_0^T &e^{-\alpha pu}\mathbb{E}\left[ |Y(u)|^p\right] \, {\rm d}u
\\
&\lesssim_{|\varphi_i|,p} 
\int_0^Tu \left(\sum_{i=1}^n \int_0^u\int_{-d \vee -r}^0(u-r)^{-2p^*\eta}e^{-2\alpha p^*(u-r-s)}\, \varphi_i({\rm d}s)\, {\rm d}r\right)^{\frac{p}{2p^*}} 
\\
&\qquad \quad \mathbb{E} \left[ \sup_{(r+s) \in [0,u]}\left(e^{-\alpha p(r+s)}|z(r+s)-y(r+s)|^p\right)\right] \, {\rm d}u
  \\
  &\lesssim_{|\varphi_i|,p} 
\int_0^Tu \left( \sum_{i=1}^n \int_0^u\int_{-d \vee -r}^0(u-r)^{-2p^*\eta}e^{-2\alpha p^*(u-r-s)}\, \varphi_i({\rm d}s)\, {\rm d}r\right)^{\frac{p}{2p^*}}
\\
& \qquad \quad \ 
  \mathbb{E} \left[ \sup_{(r+s) \in [0,T]}\left(e^{-\alpha p(r+s)}|z(r+s)-y(r+s)|^p\right)\right] \, {\rm d}u
  \\
  &=\left(\int_0^Tu \left(\sum_{i=1}^n \int_0^u\int_{-d \vee -r}^0(u-r)^{-2p^*\eta}e^{-2\alpha p^*(u-r-s)}\, \varphi_i({\rm d}s)\, {\rm d}r\right)^{\frac{p}{2p^*}} \, {\rm d}u \right ) \ \|z-y\|_{\alpha}^p
  \\
 & \lesssim_{|\varphi_i|,T,p} \left(\int_0^Tr^{-2p^*\eta}e^{-2\alpha p^*r} \, {\rm d}r\right)^{\frac{p}{2p^*}}  \|z-y\|_{\alpha}^p
 \end{align*}
where the last inequality is obtained as follows:
\begin{align*}
\int_0^T&u  \left(\sum_{i=1}^n \int_0^u\int_{-d \vee -r}^0(u-r)^{-2p^*\eta}e^{-2\alpha p^*(u-r-s)}\, \varphi_i({\rm d}s)\, {\rm d}r\right)^{\frac{p}{2p^*}} \, {\rm d}u
\\
&\le T \int_0^T \left(\sum_{i=1}^n \int_0^u\int_{-d \vee -r}^0(u-r)^{-2p^*\eta}e^{-2\alpha p^*(u-r)} e^{2\alpha p^*s}\, \varphi_i({\rm d}s)\, {\rm d}r\right)^{\frac{p}{2p^*}} \, {\rm d}u
\\
&\le T \int_0^T \left(\sum_{i=1}^n \int_0^u\int_{-d \vee -r}^0(u-r)^{-2p^*\eta}e^{-2\alpha p^*(u-r)} \, \varphi_i({\rm d}s)\, {\rm d}r\right)^{\frac{p}{2p^*}} \, {\rm d}u
\\
&\le T \sup_{u \in[0,T]} \left(\sum_{i=1}^n |\varphi_i|([-d,0])\int_0^u(u-r)^{-2p^*\eta}e^{-2\alpha p^*(u-r)} \, {\rm d}r\right)^{\frac{p}{2p^*}} 
\\
&\lesssim_{|\varphi_i|, T}\left(\int_0^Tr^{-2p^*\eta}e^{-2\alpha p^*r} \, {\rm d}r\right)^{\frac{p}{2p^*}}.
\end{align*}
Putting together the above estimates we obtain
\begin{align}
\label{T_4}
\mathbb{E}&\left[\sup_{t \in [0,T]}e^{-\alpha p t}\left| \int_0^t \mathcal{G}(\bar z^{x_1}-\bar y^{x_1}) \cdot {\rm d}Z(r) \right|^p \right]
\notag\\
&\lesssim_{T, |\varphi_i|,\eta,p} \left(\int_0^Te^{-\alpha p'u}u^{p'(\eta-1)}\, {\rm d}u\right)^{\frac{p}{p'}} \left(\int_0^Tr^{-2p^*\eta}e^{-2\alpha p^*r} \, {\rm d}r\right)^{\frac{p}{2p^*}}  \|z-y\|_{\alpha}^p
\notag\\
&\lesssim_{T, |\varphi_i|, \eta,p} C_4(\alpha) \|z-y\|_{\alpha}^p,
\end{align}
Finally, from \eqref{T_1}, \eqref{T_2}, \eqref{T_3} and \eqref{T_4} we infer
\begin{equation*}
\|F(z)-F(y)\|^p_{\alpha} \lesssim_{\mu_y,T,p,|\phi|,|\varphi_i|, \|\sigma_y\|, \delta, \eta} \sum_{i=1}^4 C_i(\alpha) \|z-y\|_{\alpha}^p,
\end{equation*}
where $C_i(\alpha) \rightarrow 0$ as $\alpha\rightarrow \infty$, for  $i=1,...,4$.
So, by taking $\alpha>0$ sufficiently large, this proves that $F$ is a contraction, thus there exists a unique
fixed point of it. In this way we prove the existence and uniqueness of the solution in the space $L^p(\Omega,S_T)$ for $p >4$. Since, for such $p$,
$L^p(\Omega,S_T)\subset L^2(\Omega,S_T)$,
such solution also belongs to $L^2(\Omega,S_T)$.
To get uniqueness in the space $L^2(\Omega,C([0,T];\mathbb{R}))$
one can take two solutions $y$ and $\tilde y$ in this space and take their difference. Using the fact that both are fixed points of $F$, by means of the Gronwall Lemma one gets
$\sup_{t\in[0,T]} \mathbb{E}\left[|y(t)-\tilde y(t)|^2\right]=0$ and this concludes the proof.\end{proof}


\subsection{Proof of Lemma \ref{LEMMA_STOCHASTIC_INTEGRAL}}
 \label{app1}
 \begin{proof}
  Let us denote with $\sigma_y^i$ the $i$-th component of $\sigma_y$,
  and let us show that
    \begin{eqnarray*}
   \mathbb E \left[   \int_{t_0}^t  \left | y(s)\sigma_y^i  +
 \int_{-d}^0  y(s+\tau) \varphi_i(\text{d} \tau)\right|^2 \text{d}s  \right] < +\infty.
  \end{eqnarray*}
  By the trivial inequality $(a+b)^2 \leq 2( a^2 + b^2)$, it is sufficient to show that
   \begin{eqnarray}\label{EX_I}
   \mathbb E \left[  \int_{t_0}^t   |y(s)\sigma_y^i|^2  \text{d}s
 \right] < +\infty,
  \end{eqnarray}
  and
   \begin{eqnarray}\label{EX_II}
   \mathbb E \left[   
    \int_{t_0}^t
\left|\int_{-d}^0 y(s+\tau)  \varphi_i(\text{d}\tau) \right|^2 \text{d}s  \right] < +\infty.
  \end{eqnarray}
 We immediately see that \eqref{EX_I} holds true thanks to  Proposition \ref{well_pose}.
   
To show \eqref{EX_II}, we use the H\"older inequality and the Fubini Theorem to estimate
\begin{align*}
\mathbb{E} \left[\int_{t_0}^t\left| \int_{-d}^0 y(\tau+s)\, \varphi_i({\rm d}\tau)\right|^2\,{\rm d}s \right]
&\le |\varphi_i| ([-d,0]) \mathbb{E} \left[\int_{t_0}^t \int_{-d}^0 |y(\tau+s)|^2\, |\varphi_i|({\rm d}\tau)\,{\rm d}s \right]
\\
&=|\varphi_i| ([-d,0]) \mathbb{E} \left[\int_{-d}^0 \int_{t_0}^t |y(\tau+s)|^2\, {\rm d}s\, |\varphi_i|({\rm d}\tau)\right]
\\
&=|\varphi_i| ([-d,0]) \mathbb{E} \left[\int_{-d}^0 \int_{t_0+\tau}^{t+\tau} |y(r)|^2\, {\rm d}r\, |\varphi_i|({\rm d}\tau)\right]
\\
&\le |\varphi_i| ([-d,0]) \mathbb{E} \left[\int_{-d}^0 \int_{t_0-d}^{t} |y(r)|^2\, {\rm d}r\, |\varphi_i|({\rm d}\tau)\right]
\\
&=\left(|\varphi_i| ([-d,0])\right)^2 \mathbb{E} \left[ \int_{t_0-d}^{t} |y(r)|^2\, {\rm d}r)\right],
\end{align*}
which is finite, thanks to Proposition \ref{well_pose}.
\end{proof}

\subsection{Proof of Lemma \ref{LEMMA_RESOLVENT_SET_RESOLVENT_BAR_A}}
\label{app2}
\begin{proof}
If $\lambda \in \mathbb{R} \cap \rho(A)$ then $K(\lambda) \ne 0$ by Lemma \ref{lemma_speck}.
To compute $R(\lambda, A)$,
we will consider for a fixed $\mathbf{m}=\left(m_0,m_1\right)\in \mathcal{H}$
the equation
\begin{equation}\label{EQ_PROOF_RESOLVENT}
(\lambda-A)\left(u_0,u_1\right)=\left(m_0,m_1\right) ,
\end{equation}
in the unknown $(u_0,u_1)\in \mathcal{D}(A)$, that by definition of $A$ is equivalent to
\[
\left\{
\begin{aligned}
(\lambda- (\mu_y - \sigma_y \cdot \kappa) ) u_0- \int_{-d}^0u_1(\tau)\Phi(\text{d}\tau) &=m_0\\
\lambda u_1-\frac{\text{d}u_1}{\text{d}s}&=m_1.
\end{aligned}\right. \]
Then
\[u_1(s)=e^{\lambda s} u_0+\int_s^0e^{-\lambda(\tau-s)}m_1(\tau)\, \text{d}\tau,\quad s\in[-d,0],\]
and $u_0$ is determined by the equation
\[\big(\lambda- (\mu_y - \sigma_y \cdot \kappa)\big)u_0=\left[ m_0+\int_{-d}^0\left( e^{\lambda \tau} u_0
+\int_{\tau}^0e^{-\lambda(s-\tau)}m_1(s)\, \text{d}s \right)\Phi(\text{d}\tau)\right],\]
yielding
\begin{eqnarray*}
K(\lambda)u_0
=m_0+ \int_{-d}^0\int_{-d}^{s}e^{-\lambda (s-\tau)}\Phi(\text{d}\tau) \, m_1(s) \text{d} s.
\end{eqnarray*}
Then the result immediately follows.
\end{proof}

\subsection{Proof of Lemma \ref{lem_k}}
\label{app3}

\begin{proof}
It is immediate to check that the function $\widetilde K:\mathbb R\to\mathbb R$ is continuous and differentiable with
\[\widetilde K^\prime(\xi)=1+\int_{-d}^0e^{\xi\tau}|\tau|\,|\Phi|(\text{d}\tau)>0, 
\]
and that
\[\lim_{\xi\to\pm\infty}\widetilde K(\xi)=\pm\infty\,.\]
Equation $\widetilde K(\xi)=0$ has thus exactly one real solution $\bar {\xi}$. Let us now show that $\bar{\xi}=\widetilde{\lambda_0}$. By the definition of $\widetilde{\lambda_0}$, clearly
we have that $\bar{\xi}\le\widetilde{\lambda_0}$. To show the opposite inequality, $\bar{\xi}\ge \widetilde{\lambda_0}$, we consider an arbitrary $\lambda=a+i b\in \mathbb{C}$ such that $\widetilde K(\lambda)=0$. 
Then
\[\begin{aligned}
0&=\text{Re}(\widetilde{K}(\lambda))=a-\mu_y+\sigma_y\cdot \kappa-\int_{-d}^0e^{a\tau}\cos(b\tau)\,|\Phi|(\text{d}\tau)\\
&\ge a-\mu_y+\sigma_y \cdot \kappa-\int_{-d}^0e^{a\tau}\,|\Phi|(\text{d}\tau)
=:\widetilde K(a)\,.
\end{aligned}\]
Since $\widetilde{K}$ is an increasing function, we can infer Re$\lambda\le \bar{\xi}$ and taking the supremum in the definition of $\widetilde{\lambda_0}$ we obtain $\widetilde{\lambda_0} \le \bar{\xi}$.
By the same argument, the relation $\widetilde K(r) > 0 \iff r > \widetilde{\lambda_0}$ immediately follows.
\end{proof}

\subsection{Proof of Lemma \ref{spectral_bounds}}
\label{app4}
\begin{proof}
Since by Lemma \ref{lem_k} we know that $\widetilde{K}$ is an increasing function and $\widetilde{K}(\widetilde{\lambda_0})=0$, we just need to prove that $\widetilde{K}(\lambda_0) \le 0$.
For every $\lambda=a+i b \in \mathbb{C}$ we have 
\begin{equation*}
\text{Re}(K(\lambda))=a-(\mu_y+\sigma_y\cdot\kappa)-\int_{-d}^0 e^{a\tau}\cos(b\tau)\,\Phi( {\rm d}\tau).
\end{equation*}
Recalling the definition of $\lambda_0$ it is enough to show that, for every $\lambda=a+i b \in \mathbb{C}$ such that $K(\lambda)=0$, it holds $\widetilde{K}(a)\le 0$.
We have that 
\begin{align*}
\tilde K (a)
&=a-(\mu_y+\sigma_y\cdot \kappa)-\int_{-d}^0 e^{a\tau}|\Phi|({\rm d}\tau)
\\
&=a-(\mu_y+\sigma_y\cdot \kappa)-\int_{-d}^0 e^{a\tau}\cos(b\tau)\,\Phi({\rm d}\tau) -\int_{-d}^0 e^{a\tau}|\Phi|({\rm d}\tau) + \int_{-d}^0 e^{a\tau}\cos(b\tau)\,\Phi({\rm d}\tau)
\\
&\le\text{Re}(K(\lambda))+\int_{-d}^0e^{a\tau} \, [\Phi-|\Phi|]({\rm d}\tau) \le 0.
\end{align*}
 This concludes the proof.
\end{proof}


\end{document}